\newcommand{\eps}{\varepsilon}
\newcommand{\smallestred}{\mathfrak{r}}
\newcommand{\p}{\mathfrak{p}}
\newcommand{\f}{\mathfrak{f}}
\newcommand{\A}{\mathcal{A}}
\newcommand{\B}{\mathcal{B}}
\newcommand{\Q}{\mathcal{Q}}
\newcommand{\R}{\mathcal{R}}
\newcommand{\D}{\mbox{\textsc{Redspace}}}
\newcommand{\U}{\mathcal{U}}
\newcommand{\opt}{\mbox{\textsc{opt}}}
\newcommand{\N}{\mathcal{N}}
\newcommand{\M}{\mathcal{M}}
\newcommand{\SonofH}{\mbox{\textsc{Son Of Harmonic}}}
\newcommand{\SuperH}{\mbox{\textsc{Super Harmonic}}}
\newcommand{\MarkItems}{\mbox{\textsc{Mark}} \mbox{\textsc{and}} \mbox{\textsc{Color}}}
\newcommand{\EHarm}{\mbox{\textsc{Extreme}} \mbox{\textsc{Harmonic}}}
\newcommand{\Pack}{\mbox{\textsc{Pack}}}
\newcommand{\PackS}{\mbox{\textsc{PackSimple}}}
\newcommand{\Hpp}{\mbox{\textsc{Harmonic++}}}
\newcommand{\af}{{\sc Any Fit}}
\newcommand{\nf}{{\sc Next Fit}}
\newcommand{\ff}{{\sc First Fit}}
\newcommand{\bfit}{{\sc Best Fit}}
\newcommand{\harm}{{\sc Harmonic}}
\newcommand{\rTypes}{N}
\newcommand{\rItems}{n}
\newcommand{\reditems}{n_{\mathrm{red}}^i}
\newcommand{\allitems}{n^i}
\newcommand{\bonus}{n_{\mathrm{bonus}}}
\newcommand{\leaves}{\mathrm{leaves}}
\newcommand{\needs}{\mathrm{needs}}
\newcommand{\redfit}{\mathrm{redfit}}
\newcommand{\bluefit}{\mathrm{bluefit}}
\newcommand{\blue}{x}
\newcommand{\redspace}{\mathrm{redspace}}
\newcommand{\redfrac}{\mathrm{red}}
\newcommand{\Tiny}{\textsc{Tiny}}
\newcommand{\UnmixedRed}{\textsc{UnmixedRed}}
\newcommand{\finalratio}{1.5813}	
\newcommand{\finallb}{1.5762}
\newcommand{\superhratio}{1.58880}
\newcommand{\newsuperhratio}{1.5884}
\algnewcommand{\LineComment}[1]{\State // \emph{#1}}
\algnewcommand\algorithmicforeach{\textbf{for each}}
\newcommand{\qed}{}
\newtheorem{invariant}{Invariant}
\newtheorem{definition}{Definition}
\newtheorem{aproperty}{Property}
\newtheorem{pproperty}{Packing Property}
\newtheorem{lemma}{Lemma}
\newtheorem{corollary}{Corollary}
\newtheorem{theorem}{Theorem}
\newenvironment{proof}{\noindent\textbf{Proof\ \ }}{\hfill$\square$}
\newlength{\LPlhbox}
\newcommand{\LPblocktag}[2]{\settowidth{\LPlhbox}{(#1)}%
	\parbox{\LPlhbox}{\begin{align}\tag{#1}#2\end{align}}%
	\hspace*{\fill}}
\begin{document}

\title{Beating the Harmonic lower bound for online bin packing\thanks{A preliminary version of this paper appeared in 43rd Int. Coll. on Automata, Languages and Programming (ICALP 2016), p. 41:1-41:14.}
}


\author{Sandy Heydrich\footnote{Fraunhofer-Institut f\"ur Techno- und Wirtschaftsmathematik ITWM, Kaiserslautern, Germany. \url{heydrich@mpi-inf.mpg.de} Work performed at Max Planck Institute for Informatics and Graduate School of Computer Science, Saarland Informatics Campus, Germany, and supported by the Google Europe PhD Fellowship. }         \and
        Rob van Stee\footnote{Department of Mathematics, University of Siegen, 57068 Siegen, Germany, \url{rob.vanstee@uni-siegen.de}.
        Work done at the University of Leicester, LE1 7RH University Road, Leicester, UK.} 
}

\date{Received: date / Accepted: date}

\maketitle

\begin{abstract}
In the online bin packing problem, items of sizes in $(0,1]$ arrive online to
be packed into bins of size 1. The goal is to minimize the number of used bins.
In this paper, we present an online bin packing algorithm with asymptotic 
competitive ratio of \finalratio. 
This is the first improvement 
in fifteen years and reduces the gap to the lower bound by 
15\%. 
Within the well-known {\SuperH} framework, no competitive ratio below 1.58333 can be achieved.

We make two crucial changes to that framework. First, 
some of our algorithm's decisions depend on \emph{exact sizes} of items, instead of only their types.
In particular, for each item with size in $(1/3,1/2]$, we use its exact size to determine if it can be packed together with an item of size greater than $1/2$. Second, we add constraints to the linear programs considered by Seiden, in order to better lower bound the optimal solution. These extra constraints are based on marks that we give to items based on how they are packed by our algorithm.
We show that for each input, there exists a single weighting function that can upper bound the competitive ratio on it.

We use this idea to simplify the analysis of {\SuperH}, and show that the algorithm {\Hpp} is in fact \superhratio-competitive (Seiden proved 1.58889), and that {\newsuperhratio} can be achieved within the {\SuperH} framework.
Finally, we give a lower bound of {\finallb} for our new framework.
\end{abstract}

\section{Introduction}
In the online bin packing problem, a sequence of \emph{items} with sizes in the
interval $(0,1]$ arrive one by one and need to be packed into \emph{bins}, so that
each bin contains items of total size at most 1. Each item must be irrevocably
assigned to a bin before
the next item becomes available. The algorithm has no knowledge about future items.
There is an unlimited supply of bins available, and the goal is to minimize the 
total number of used bins (bins that receive at least one item).

Bin packing is a classical and well-studied problem in combinatorial optimization. 
Extensive research has gone into developing approximation
algorithms for this problem, e.g. ~\cite{CoGaJo97,GaGrUl72,FerLue81,KarKar82,Rothvoss13,GoeRot14}. Such algorithms have provably good competitive ratio for any possible
input and work in polynomial time. In fact, the bin packing problem was one 
of the first for which approximation algorithms were designed \cite{Johnso73}. 

For bin packing, we are typically interested in the long-term behavior of
algorithms: how good is the algorithm for large inputs, relative to the optimal solution? 
This ratio is often determined by 
very small inputs. 
To avoid such pathological
instances, the {\em asymptotic competitive ratio} was introduced,
which we now define. For a given input sequence $\sigma$,
let $\A(\sigma)$ be the number of bins used by algorithm~$\A$ on $\sigma$.
The {\em asymptotic competitive ratio} for an algorithm~$\A$ is defined to be
\begin{equation}
\label{eq:apr}
\R^{\infty}_{\A} = \limsup_{n\to\infty}\sup_\sigma\left\{\left. \frac{{\A} (\sigma)}
{ \opt(\sigma)} \right| \opt(\sigma) = n\right\}.
\end{equation}
From now on, we only consider the asymptotic competitive ratio unless otherwise stated. 
For a given input, we typically consider a fixed optimal solution for the analysis.

Lee and Lee~\cite{LeeLee85} presented an algorithm called \harm, 
which partitions the interval
$(0,1]$ into $m>1$ intervals $(1/2,1], (1/3,1/2],\dots,(0,1/m]$.
The type of an item is defined as the index of the interval which contains its size.
Each type of items is packed into separate bins ($i$ items per bin for type $i=1,\dots,m-1$; type $m$ items are packed
using {\nf} in dedicated bins).
For any $\varepsilon>0$, there is a number $m$ such that the \harm\ algorithm
that uses $m$ types has a competitive ratio of at most
$(1+\varepsilon)\Pi_{\infty}$~\cite{LeeLee85}, where $\Pi_{\infty} \approx 1.69103$ for $m \mapsto \infty$.

\begin{definition}
\label{def:1}
We use the following adjectives for ranges of item sizes.
\emph{Huge} means $(2/3,1]$, \emph{large} means $(1/2,2/3]$, \emph{medium} means $(1/3,1/2]$, and \emph{small} means $(0,1/3]$.
\end{definition}

If we consider the bins packed by \harm, then
it is apparent that in bins with large items, nearly half the space can remain
unused. It is better to use this space for items of other types.
After a sequence of papers which used this idea to develop ever better algorithms~\cite{LeeLee85,RaBrLL89,Richey91},
Seiden~\cite{Seiden02} presented a general framework called {\SuperH} which captures all of these algorithms. We describe it in some detail, since we reuse many concepts,
and in order to describe our modifications in a clear way.

\paragraph{The {\SuperH} framework~\cite{Seiden02}}

The fundamental idea of all \SuperH{} algorithms is to first classify items by
size, and then pack an item according to its type (as opposed to
letting the exact size influence packing decisions).
For the classification of items, we use numbers $t_1 = 1 \ge t_2 \ge \dots \ge t_\rTypes > 0$ to 
partition the interval
$(0,1]$ into subintervals $I_1,\dots,I_{\rTypes}$.
($\rTypes$ is a parameter of the algorithm.)
 We define $I_j=(t_{j+1},t_j]$
for $j=1,\dots,\rTypes-1$ and $I_{\rTypes}=(0,t_{\rTypes}]$.
We denote the type of an item $\p$ by $t(\p)$, and its size by $s(\p)$.
An item $\p$ has type $j$ if $s(\p)\in I_j$. 
A type $j$ item has size at most $t_j$. 

Each item receives a color when it arrives, red or blue; an algorithm in the {\SuperH} framework defines parameters $\redfrac_j \in [0,1]$ for each type $j$, which denotes the fraction of items of type $j$ that are colored red.\footnote{This parameter was called $\alpha_j$ by Seiden; we have made many changes to the (somewhat ad hoc) notation.}
Blue items of type $j$ are packed using \nf{}. We use each bin until 
exactly $\bluefit_j := \lfloor 1/t_j \rfloor$ items are packed into it.
For each bin, smaller red items may be packed into the space 
of size $1-\bluefit_jt_{j}$ that remains unused. 
Red items are also packed using \nf{}, using a fixed amount of the available space in a bin. 
This space is chosen in advance from a fixed set $\D=\{\redspace_i\}_{i=1}^K$ of spaces, where $\redspace_1\le\dots\le\redspace_K$. 
If red items of type $j$ are packed into a space of size $\redspace_i$, we pack $\redfit_j:=\lfloor \redspace_i/t_j\rfloor$ red items into each bin.
In the space not used by red items, the algorithm may pack blue items.
There may be several types that the algorithm can pack into a bin together with red items of type $j$.
Each bin will contain items of at most two different types.
If a bin contains items of two types, it is called mixed. If it contains items of only one type, but items of another type may be packed into this bin later, it is called unmixed. 
A bin that will always contain items of one type is called pure blue. 
 A {\SuperH} algorithm tries to minimize the number of unmixed bins,
and to place red and blue items in mixed bins whenever possible. 
Seiden~\cite{Seiden02} showed that the {\SuperH} algorithm {\Hpp}, which uses 70
intervals for its classification and has about 40 manually set
parameters, achieves a competitive ratio of at most 1.58889.

The algorithm {\Hpp} always packs only one red item in a bin, and Seiden exploits this fact in his analysis. However, a very minor technical change is sufficient to make his analysis more general. Since Seiden does mention the possibility of packing more than one red item in a bin, only deciding against it because he could not find good settings for the parameters, we do not see this as a new idea of our algorithm. By allowing more than one red item in a bin in the {\SuperH} framework, a competitive ratio of 1.5884 can be achieved.

Ramanan et al.~\cite{RaBrLL89} gave a lower bound of $19/12\approx1.58333$
for all {\SuperH} algorithms.
It is based on \emph{critical bins} (formally defined later) like the one shown in
Fig. \ref{fig:ramanan}, which contain a \emph{medium} item (size in $(1/3,1/2]$) and
a \emph{large} item (size in $(1/2,2/3]$). 
Both of these items arrive many times, and although they fit pairwise
into bins, the algorithm does not combine them like this. 
In contrast, the optimal solution consists exclusively of critical bins.

\paragraph{Our contribution}
We avoid the lower bound construction of Ramanan et al.~\cite{RaBrLL89} by defining
the algorithm so that it combines medium and large items \emph{whenever} they 
fit together in a single bin. 
Essentially, we use {\af} to combine such items into bins (under certain
conditions specified below). This is a generalization of the well-known
algorithms {\ff} and {\bfit}~\cite{Ullman71,GaGrUl72}, which have been used in similar 
contexts before~\cite{babdss15,BaChKK04}. 
For all other items, we essentially leave the structure of {\SuperH} intact, although a number of technical changes are made, as we describe next.
Each bin will still contain items from at most two types, and if there are two types in a bin, then the items of one type are colored blue and the others are colored red.

\begin{figure}
	\begin{center}
		\includegraphics[width=0.85\textwidth]{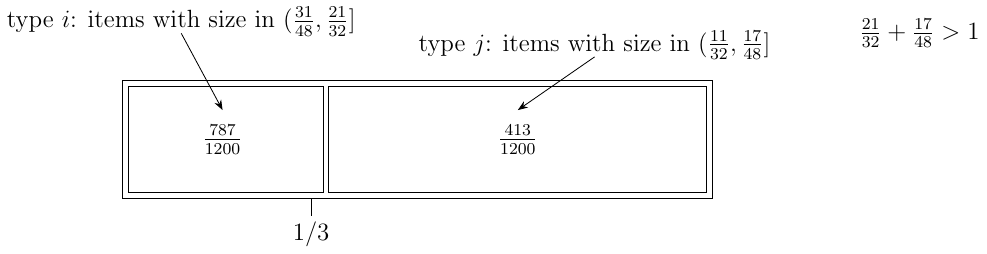}
		\caption{\label{fig:ramanan}A critical bin.
			The item sizes are chosen such that a given interval classification algorithm 
			(in this case, {\Hpp}) does not pack these items together. For any \SuperH{} algorithm, such sizes can be found.
			The central idea of our new algorithm is that we limit the number of times that these critical bins can occur in the optimal solution.
			This is how we beat the ratio of 1.58333.
		}
	\end{center}
\end{figure}

We extend the definitions of huge, large, medium and small items (Definition \ref{def:1}) to types in the natural way.\footnote{There will not be any types that contain $1/2$, $1/3$ or $2/3$ as an inner point in their interval.}
In order to benefit from using \af, we need to ensure that for each medium type, as much as possible, the \emph{smallest} items are colored red.
Otherwise, we run into the same problems as {\SuperH}, see Fig.~\ref{fig:red-too-large} for an example.
Our plan is therefore to initially give each medium item \emph{no color} and pack it alone in a bin.
After several items of some type $j$ have arrived, we color the smallest one red and the others blue.
The next arriving blue items of this type (so-called \emph{late} items) will be packed into the bins with single blue items.
(See Fig. \ref{illus}.) 
In this way, at least the blue items which are \emph{first} in their bins (the \emph{early} items) are
not smaller than the smallest red item. We thus have a lower bound on the size of half of the blue items (the early ones).

However, postponing coloring decisions like this is not always possible or even desirable. In fact there are exactly two cases where this will not be done upon
arrival of a new medium item $\p$.
\begin{enumerate}
	\item 
	If a bin with suitable small red items (say, of some type $t$)
	is available, and it is time to color $\p$ blue,
	we will pack $\p$ into that bin and color it blue, regardless of the precise size of $\p$.
	In this case, in our analysis we will carefully consider how many small
	items of type $t$ the input contains; knowing that there must be some.
	This implies that in the optimal solution, not \emph{all} the bins can be critical.
	Moreover, our algorithm packs these small items very well, using almost the entire space in the bin.
	\item
	If a bin with a large item is available, and $\p$ fits into such a bin, we will
	pack $\p$ in one such bin as a red item regardless of which color it was supposed to get. 
	This is the best case overall, since finding combinations like this was exactly our goal! 
	This helps to avoid the worst case instances for {\SuperH} (Fig.~\ref{fig:all-medium-red}).
	However, there is a technical problem with this, which we discuss below.
\end{enumerate}

Overall, we have three different cases: medium items are packed alone initially (in which case we have a guarantee about the sizes of some of the blue items), medium items are combined with smaller red items (in which case these small items exist and must be packed in the optimal solution), or medium items are combined with larger blue items (which is exactly our goal).

The main technical challenge is to quantify these different advantages into one overall analysis. 
In order to do this, we introduce---in addition to and separate from the coloring---a marking of the medium items. The marking indicates whether the blue or red items of a given mark are in mixed or unmixed bins. This will bound the number of critical bins (Fig.~\ref{fig:ramanan}) that can exist in the optimal solution, leading to better lower bounds for the optimal solution value than Seiden~\cite{Seiden02} used.

\begin{figure}
	\begin{center}
		\subfloat
		[Packing produced by \SuperH{}.]
		{\includegraphics[width=0.45\textwidth]{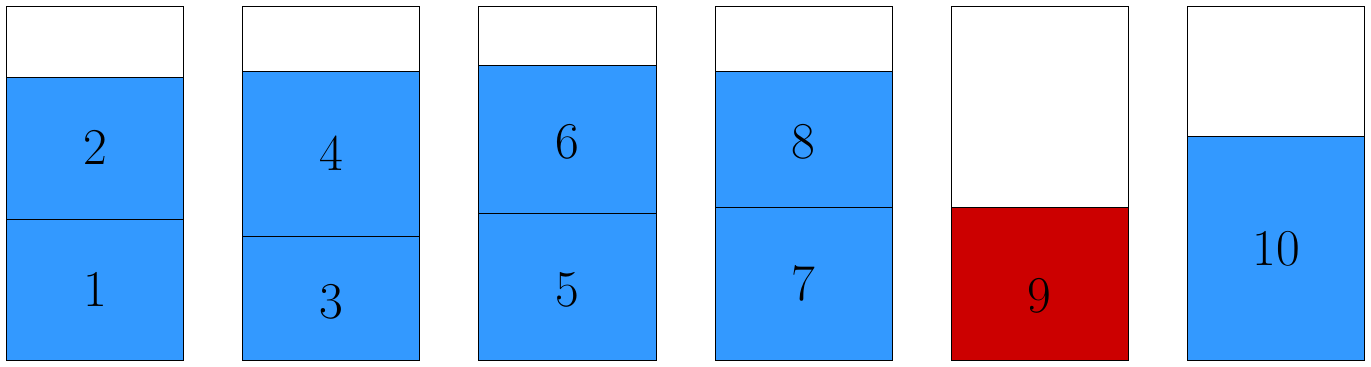}}
		\hspace{7mm}
		\subfloat
		[Optimal packing.]
		{\includegraphics[width=0.45\textwidth]{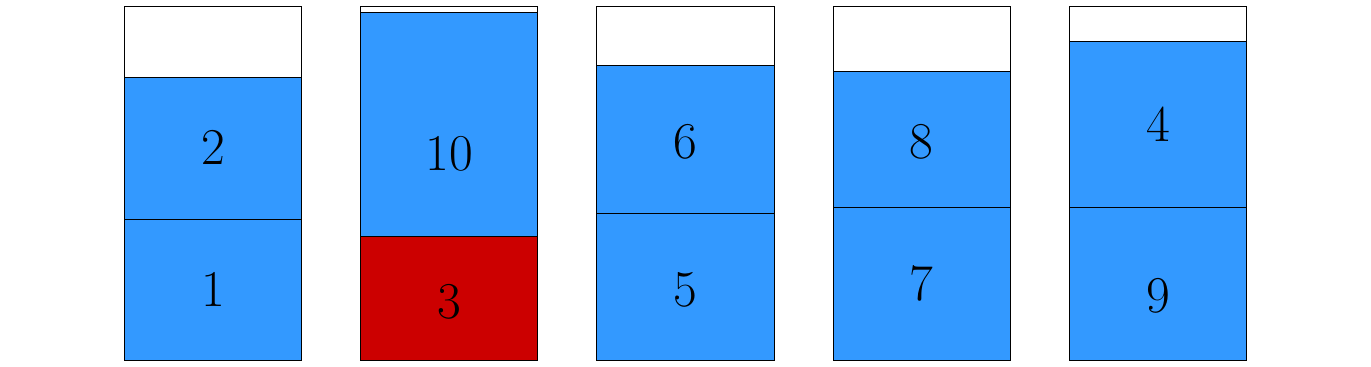}}
	\end{center}
	\caption{Packing of a sequence of medium items (all of the same type $i$) followed by one large item. The items arrive in the order indicated by the numbers. \SuperH{} needs more bins than the optimal solution, as the red medium item is too large to be combined with the large item. 
		(We assume $\redfrac_i=1/9$ here.) \label{fig:red-too-large}}
\end{figure}

\begin{figure}
	\begin{center}
		\subfloat
		[Pack items one per bin without coloring them.]
		{\includegraphics[width=0.45\textwidth]{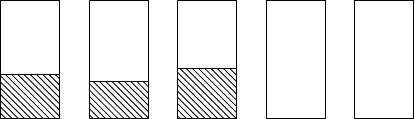}}
		\hspace{7mm}
		\subfloat
		[The fifth item arrives: time to fix the colors.]
		{\includegraphics[width=0.45\textwidth]{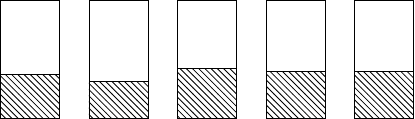}}
		\\
		\subfloat
		[The smallest item becomes red.]
		{\includegraphics[width=0.45\textwidth]{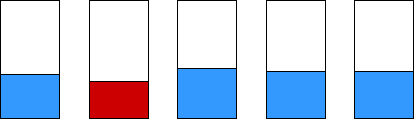}}
		\hspace{7mm}
		\subfloat
		[Additional blue items of the same type are added.]
		{\includegraphics[width=0.45\textwidth]{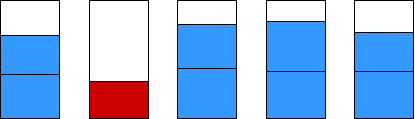}}
	\end{center}
	\caption{\label{illus}Illustration of the coloring in \EHarm{}. Hatched items are uncolored. In this example, $\redfrac_i=1/9$, where $i$ is the type of all items depicted in this example.
		Note that the ratio of $1/9$ does not hold (for the bins shown)
		at the time that the colors are fixed: $1/5$ of the items are red at this point.
		The ratio $1/9$ is achieved when all bins with blue items contain two blue items.
		The blue items which arrive in step (d) are called \emph{late} items.}
\end{figure}

\begin{figure}
	\centering
	\subfloat
	[Packing produced by \SuperH{}.]
	{\includegraphics[width=0.45\textwidth]{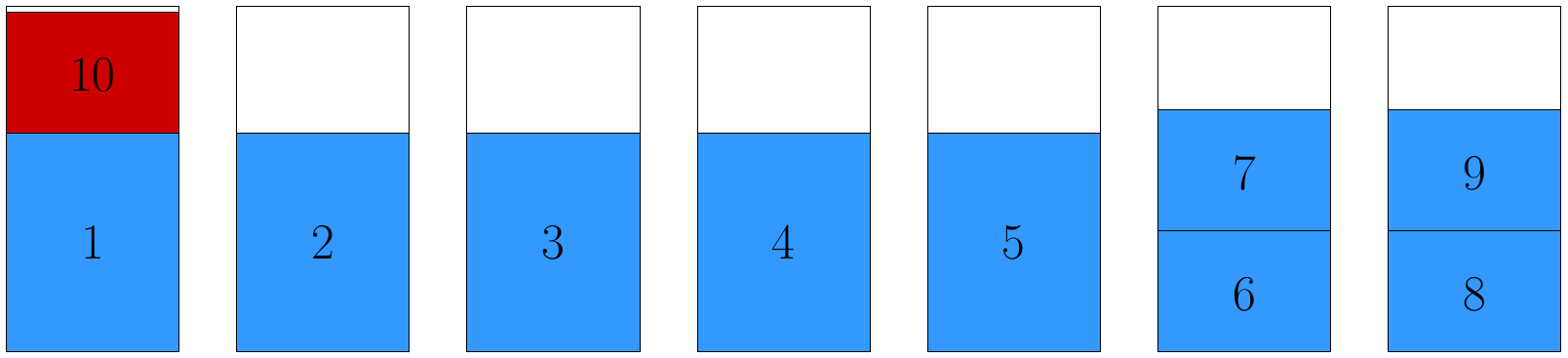}}
	\hspace{7mm}
	\subfloat
	[Optimal packing and packing produced by \EHarm{}.]
	{\includegraphics[width=0.45\textwidth]{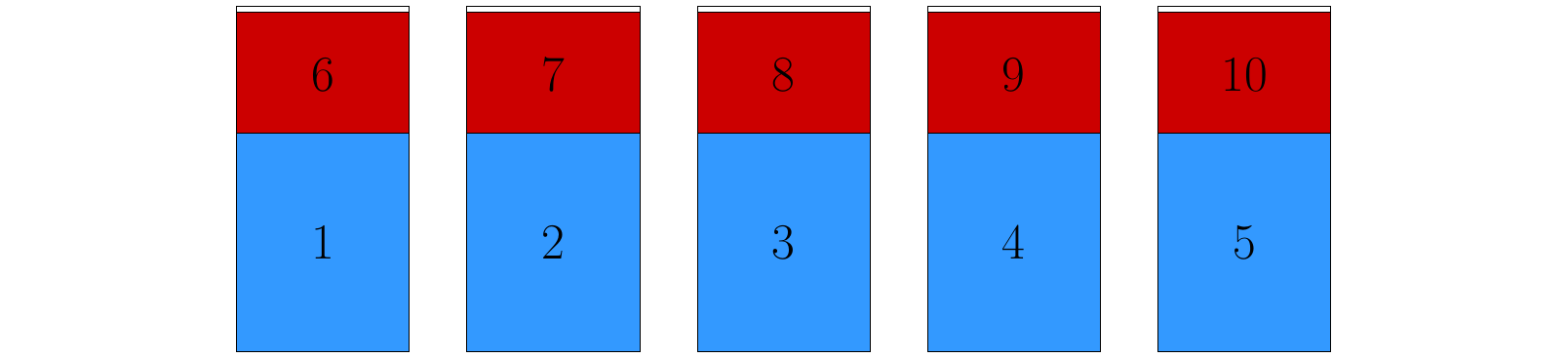}}
	\caption{An example illustrating why it helps to occasionally color more than a $\redfrac_i$-fraction of the items of a medium type $i$ red. First, five large items arrive (numbered 1-5), then five medium items (numbered 6-10) of type $i$. We assume that $\redfrac_i=1/5$ here. \label{fig:all-medium-red}}
\end{figure}

Maintaining the fraction $\redfrac_j$ of red items for all marks separately is necessary for the analysis. 
As we have seen however, if many large items arrive first, we must pack medium items with them whenever possible,
even if this violates the ratio $\redfrac_j$. 
If there are more than $\redfrac_j$ medium items of some type $j$ when the input ends,
we call those items \emph{bonus items}. Each bonus item is packed in a bin with a large item.
After the input ends, we will (virtually) make some of those large items \emph{smaller} so that they get 
type $j$ as well (see Fig.~\ref{fig:post-processing}(a)).
We then change the colors of the bonus items to ensure the proper fraction
$\redfrac_j$ of red medium items.
Hence we \emph{modify the input}, but we only do this for the analysis and only once all the items have been packed.
Clearly the number of bins in the optimal solution can only decrease as a result of making some items smaller.

However,
there could be small red items (say, of type $t$) in separate bins that could have been packed in bins with two medium type $j$ items, 
had such bins been available at the time when the small red items arrived. Creating such bins after the input ends generates a packing that 
is not covered by our analysis (as this analysis assumes that such compatible items are packed together in one bin, not two; see Fig.~\ref{subfig:postproc-b}).
To avoid this, we \emph{do not allow} small items to be packed into new bins as red items as long as bins with large and medium items exist that may later be modified. Instead, in such a case, we \textbf{count} a single medium item in such a bin as a number of red small items of type $t$, and pack the incoming item of type $t$ as a blue item (Fig.~\ref{fig:post-processing}(c)).
This ensures (as we will show) that if suitable bins with blue items are available,
red items of type $t$ are always packed in them, rather than in new bins.

At this point, we stress that our algorithm does not actually modify the input while it is packing it in any way. The only thing that changes is the internal accounting of the algorithm (in such a way that it thinks it has packed less total size than it actually has). We will show a number of properties of the packing that the algorithm produces, and we crucially show that all of these properties are maintained whenever a bonus item is counted as several small items, which is the only point in which the accounting of the algorithm is changed relative to the actual input. Thus we do not follow the perhaps more common approach of showing that the algorithm would have performed the same on the modified input; we believe that approach cannot be applied here, as we do not see how to define arrival times for the small items that are created.

\begin{figure}[t]
	\begin{center}
	\subfloat
		[If bonus items remain after the algorithm terminates, we transform some large items to medium items, re-establishing the correct ratio of red items for the medium items (in this example, this ratio is $1/5$).]
		{\includegraphics[width=\textwidth]{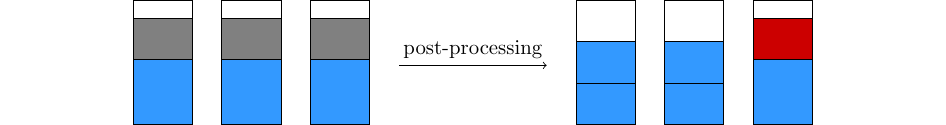}\label{subfig:postproc-a}}
	\end{center}
	\subfloat
		[This situation \textbf{must not occur} in our algorithm: We shrink a large item packed with a bonus item but there are uncombined red small items compatible with the bonus item. ]
		{\includegraphics[width=\textwidth]{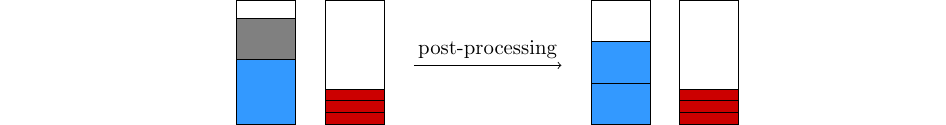}\label{subfig:postproc-b}}
	\\
	\subfloat
		[In order to prevent the situation in Fig. \ref{subfig:postproc-b}, we (virtually) resize and split the bonus item into small items when other small red items arrive. The new item becomes blue instead. Later, more small blue items can be packed with it.]
		{\includegraphics[width=\textwidth]{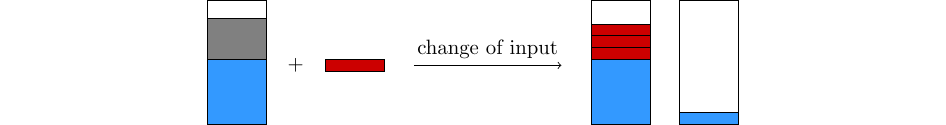}\label{subfig:postproc-c}}
	\\
	\caption{Post-processing and change of the input for the analysis. Gray items denote bonus items.
	\label{fig:post-processing}}
\end{figure}

Like Seiden~\cite{Seiden02} and many other authors~\cite{Ullman71,LeeLee85,RaBrLL89}, we use weighting functions to analyze the competitive ratio of our algorithm. 
A weighting function defines a weight for each item, depending on its type (and mark, in our case). 
By analyzing these, 
Seiden ended up with a set of mathematical programs that upper bounded
the competitive ratio of {\SuperH} algorithms. These
represented a kind of knapsack problems where each item has two
different weights.
Seiden used heuristics to solve these problems in reasonable time.

We instead split each mathematical program into two standard linear programs, and we add new constraints limiting how many critical bins there can be in the optimal solution, which can be deduced from the marks of the items. We solve the linear programs by creating a separation oracle for the dual, which solves a standard knapsack problem (with just one weight per item), making the results much easier to verify. The final weighting function we find depends on the input but does not depend on the marks anymore.
The two dual programs of each pair of linear programs are symmetric, so it is sufficient to give a solution for one of them.

We implemented a computer program which quickly solves the knapsack problems and also does the other necessary work, including the automated setting of many parameters like item sizes and values $\redfrac_i$. As a result, our algorithm  {\SonofH} requires far less manual settings than \Hpp. We also provide a verifier program that checks the feasibility of these solutions; this verifier program should be easy to check by a reader. In addition, we also output the set of knapsack problems directly to allow independent verification.

This approach can also be applied to the original {\SuperH} framework. Surprisingly, we find that the algorithm {\Hpp} is in fact \superhratio-competitive,
using only one weighting function per input. 
A benefit of using our approach is that this result becomes more easily verifiable as well. 
Furthermore, we were able to improve and simplify the parameters of {\Hpp} to achieve a competitive ratio of {\newsuperhratio} within the {\SuperH} framework.

Our second main contribution is a new lower bound for all algorithms of this kind.
The fundamental property of all these algorithms is that they color a fixed fraction
of all items red (for each type). 
We show that no such algorithm can be better
than {\finallb}-competitive. Thus fundamentally different ideas will be needed to
get much closer to the lower bound of 1.54037, which we believe is closer to the
true competitive ratio of this problem.

\paragraph{Related Results}
The online bin packing problem was first investigated by
Ullman~\cite{Ullman71}.
He showed that the \ff\ algorithm has competitive ratio
$\frac{17}{10}$. This result was then published in~\cite{GaGrUl72}.
Johnson~\cite{Johnso74} showed that the \nf\ algorithm has competitive ratio 2.
Yao showed that {\sc Revised First Fit} has competitive ratio $\frac53$,
and further showed that no online algorithm has competitive ratio less
than $\frac32$~\cite{Yao80A}.
Brown and Liang independently improved this lower bound to
1.53635~\cite{Brown79,Liang80}.
The lower bound stood for a long time at $1.54014$,
due to van Vliet~\cite{Vliet92},
until it was improved to $\frac{248}{161}= 1.54037$ by Balogh et al.~\cite{BBG12}.

An improved upper bound of 1.5873 by Balogh et al.~\cite{arxiv} partially builds on our ideas but uses a different analysis. The comments about our paper in the arxiv version refer to a previous version and are outdated, as the issue it mentions has long been fixed.

The \emph{offline} version, where all the items are given in advance, 
is well-known to be NP-hard~\cite{GarJoh79}. 
This version has also received a great deal of attention,
for a survey see~\cite{CoGaJo97}.

\section{The {\EHarm} framework}
First of all, to facilitate the comparison to the new framework, we give a formal definition of
the {\SuperH} framework in Algorithm \ref{alg:sh} and \ref{alg:pack1}. It uses the following definitions.
Let $\allitems$ count the total number of items of type $i$, and $\reditems$ count the number of red items of type $i$.

\begin{algorithm}
	\caption{\label{alg:sh}How the \SuperH{} framework packs a single item $\p$ of type $i\le \rTypes-1$.
		At the beginning, we set $\reditems \gets 0$ and $n^i \gets 0$ for $1\le i \le \rTypes-1$.}
	\label{SuperH}
	\begin{algorithmic}[1]
		\State{$n^i \gets n^i + 1$}
		\If{$\reditems < \lfloor\redfrac_i n^i \rfloor$ \label{sh:centralif}}\Comment{pack a red item}
		\State{$\PackS(\p,\ $red$)$}
		\State{$\reditems \gets \reditems + 1$}
		\Else\Comment{pack a blue item}
		\State{$\PackS(\p,\ $blue$)$}
		\EndIf
	\end{algorithmic}
\end{algorithm}

\begin{algorithm}
	\caption{The algorithm $\PackS(\p,c)$ for packing an item $\p$ of type $i$ with color $c\in\{$blue, red$\}$.\label{alg:pack1}}
	\begin{algorithmic}[1]
		\State{Try the following types of bins to place $\p$ with color $c$ in this order:}
		\State{ \hspace{3mm} \textbullet\ 
			a pure blue, mixed, or unmixed $c$-open bin with items of type $i$ and 
			color $c$ \label{pack:open1}}
		\State{ \hspace{3mm} \textbullet\ 
			an unmixed bin that is \emph{compatible} with $\p$ (the bin becomes mixed) \label{pack:compatible1}}
		\State{ \hspace{3mm} \textbullet\ 
			a new unmixed bin (or pure blue bin, if $\leaves(i)=0$ and $c=$ blue)
			\label{pack:c11}}
	\end{algorithmic}
\end{algorithm}

A {\SuperH} algorithm uses a function $\leaves:\{1,$ $\dots,\rTypes\}\to\{0,\dots,K\}$ to map each item type to an index of a space in $\D$, indicating
how much space for red items it leaves unused in bins with blue items of this type. Here $\leaves(j)=0$ means that no space is left for red items.
The algorithm also uses a function $\needs:\{1,\dots,\rTypes\}\to\{0,\dots,K\}$ to map
how much space (given by an index of $\D$) red items of each type require.
We define $\needs(i)=0$ if and only if $\redfrac_i=0$ (i.e., there are no red items of this type).

For each type $i$ such that $\leaves(i)=0$, 
the items of this type are packed in pure blue bins, that contain only blue items (only one type per bin).
An unmixed bin is called unmixed blue or unmixed red depending on the color of the items in it.

A mixed bin with blue items of type $i$ and red items of type $j$ satisfies the following properties:
$\leaves(i)>0,\redfrac_j>0,\redspace_{\needs(j)}\le \redspace_{\leaves(i)}$.
Note that the last inequality holds if and only if $\needs(j)\le \leaves(i)$.
The blue items will use space at most $1-\redspace_{\leaves(i)}$ and the red items will use
space at most $\redspace_{\needs(j)} \le \redspace_{\leaves(i)}$.
%
	An unmixed blue bin with blue items of type $j$ is \emph{compatible} with a red item of type $i$
	if $\needs(i)\le \leaves(j)$.
	An unmixed red bin with red items of type $j$ is {compatible} with a blue item of type $i$ if $\needs(j)\le \leaves(i)$.

\begin{definition}
	A bin is \emph{red-open} if it contains some red items but can still receive additional red items. 
	We define \emph{blue-open} analogously. 
	A bin is \emph{open} if it is red-open or blue-open. 
\end{definition}

Red-open bins with red items of type $j$ contain at least one and at most $\redfit_j-1$ red items.
Blue-open bins can be pure blue.
Red-open and blue-open bins can be mixed or unmixed.
Mixed bins can be red-open and blue-open at the same time.
A bin with $\bluefit_i$ items of type $i$ but no red items is not considered open, 
even though red items might still be packed into it later.

For {\EHarm}, we extend the definition of compatible bins.
As noted in the Introduction, some items will not receive a color when they arrive, but only later.
The goal of having uncolored items is to try and make sure that relatively small items of each medium type become red in the end
(to make it easier to combine them with large items).

\begin{definition}
\label{def:comp}
An unmixed bin is \emph{red-compatible} with a newly arriving item $\p$ if
\begin{enumerate}
\item
the bin contains blue or uncolored items\footnote{\label{fn4}We will see later that if an item has no color, it is the only item in its bin (Property \ref{prop:nocolor}).} of type $i$, 
$\p$ is small
and $\leaves(i)\ge \needs(t(\p))$, \emph{or}
\item
the bin contains a (blue) large item of size $x$, $\p$ is medium and $s(\p)\le 1-x$.
\end{enumerate}
An unmixed bin is \emph{blue-compatible} with a newly arriving item $\p$ if
\begin{enumerate}
\item
the bin contains red items\footnotemark[\value{footnote}] of type $j$, 
$\p$ is medium or small
and $\leaves(t(\p))\ge \needs(j)$, \emph{or}
\item
the bin contains one red or uncolored medium item of size $x$, $\p$ is large and $s(\p)\le 1-x$.
\end{enumerate}
\end{definition}

It follows that for checking whether a large item and a medium item can be combined in a bin, we ignore the values $\leaves(i)$ and $\needs(j)$ and use only the relevant parts 2 of
Definition \ref{def:comp}.

Like {\SuperH} algorithms, an {\EHarm} algorithm first tries to pack a red (blue) 
item into a red-open (blue-open) bin with items of the same type and color; then it tries to find an unmixed compatible bin; if all else fails, it opens a new bin. 
Note that the definition of compatible has been extended compared to {\SuperH},
but we still pack blue items with red items of another type and vice versa;
there will be no bins with blue (or red) items of two different types.
The new framework is formally described in Algorithms \ref{alg:eh} and \ref{alg:pack}.
Items of type $\rTypes$ are packed using \nf{} as before.
We discuss the changes from {\SuperH} one by one. 
All the changes stem from our much more careful packing of medium items.
The algorithm {\MarkItems} called in line \ref{callmark} of {\EHarm} will be presented in Section \ref{sec:mark-and-color}.
This algorithm will take care of assigning marks and colors to the items. In particular, this will take care of fixing the color of medium items as described in Figure \ref{illus}.

As can be seen in {\Pack} (lines \ref{pack:open}, \ref{pack:c1} and \ref{pack:color}), medium items that are packed into new bins are initially packed \textbf{one} per bin and not given a color.
We wait until enough of these items have arrived, and then color the smallest one red using
{\MarkItems} (Fig. \ref{illus}). 
Note that $\reditems$ is increased in line \ref{eh:fixbonus4} even though the item might not receive a color at this time.
This means that the value $\reditems$ does not alway accurately reflect how many red items there currently are.
We will show that this is not an issue for the analysis (it will be accurate up to a constant).

\begin{algorithm}
	\caption{\label{alg:eh}How the \EHarm{} framework packs a single item $\p$ of type $i\le \rTypes-1$.
		At the beginning, we set $\reditems \gets 0$, $\bonus^i \gets0$
		and $n^i \gets 0$ for $1\le i \le \rTypes-1$.}
	\label{ExtremeH}
	\begin{algorithmic}[1]
		\State{$n^i \gets n^i + 1$\label{eh:ni+1}}
		\If{$\reditems < \lfloor\redfrac_i n^i \rfloor$ \label{eh:centralif}}\Comment{pack a red item}
		\If{$\bonus^i>0$ or $\needs(i)\le 1/3 \land \exists j: \bonus^j>0$
			\label{eh:fixbonus} }
		\LineComment{special case: replace bonus item instead and pack the new item as blue; see Fig. \ref{subfig:postproc-c}}
		\If{$\bonus^i>0$}
		\State{Let $\mathfrak b$ be a bonus item of type $i$}\Comment{in this case, $\redfit_i=1$}
		\Else
		\State{Let $\mathfrak b$ be a bonus item of some type $j$ with $\bonus^j>0$}\Comment{here $i$ is a small type}
		\EndIf
		\State{$\bonus^{t(\mathfrak b)} \gets \bonus^{t(\mathfrak b)}-1$}
		\State{Label $\mathfrak b$ as type $i$} \Comment{count $\mathfrak b$ as type $i$ item(s) and color it/them red}\label{eh:fixbonus1}
		\State{$n^i \gets n^i+\redfit_i$}\Comment{$\mathfrak b$ might have been of type $i$ already, then $\redfit_i=1$}\label{eh:countasi} 
		\State{$\reditems \gets \reditems+\redfit_i$
			\label{eh:fixbonus2}}
		\State{$\Pack(\p,\ $blue$)$\label{eh:fixbonus3}} \Comment{since we now have $\reditems \ge \lfloor\redfrac_i n^i \rfloor$ again}
		\Else
		\State $\Pack(\p,\ $red$)$
		\State{$\reditems \gets \reditems+1$ \label{eh:fixbonus4}} \Comment{The item is red or uncolored}
		\EndIf
		\Else \Comment{pack a blue item}
		\If {$\p$ is medium, $\redfrac_i>0$, and there exists a bin $B$ that is red-compatible with $\p$  	\label{eh:early1}}
		\State{Place $\p$ in $B$ and label it as bonus item. \Comment{special case: bonus item}\label{eh:bonus}}
		\State{$n^i \gets n^i-1$\label{eh:ni-1}\Comment{we do not count this item for type $i$}}
		\State{$\bonus^i \gets \bonus^i+1$\label{eh:early2}} \Comment{Note that $B$ contains a large item}
		
		\Else	\State{$\Pack(\p,\ $blue$)$} \Comment{The item is blue or uncolored}
		\EndIf
		\EndIf
		\State{Update the marks and colors using {\MarkItems} (Section \ref{sec:mark-and-color}).\label{callmark}}
	\end{algorithmic}
\end{algorithm}

\begin{algorithm}
	\caption{The algorithm $\Pack(\p,c)$ for packing an item $\p$ of type $i$ with color $c\in\{$blue, red$\}$.\label{alg:pack}}
	\begin{algorithmic}[1]
		\State{Try the following types of bins to place $\p$ with (planned) color $c$ in this order:}
		\State{ \hspace{3mm} \textbullet\ 
			a pure blue, mixed, or unmixed $c$-open bin with items of type $i$ and 
			color $c$ \label{pack:open}}
		\State{ \hspace{3mm} \textbullet\ 
			a $c$-compatible unmixed bin
			(the bin becomes mixed, with fixed colors of its items) \label{pack:compatible}}
		\State{ \hspace{3mm} \textbullet\ 
			a new unmixed bin (or pure blue bin, if $\leaves(i)=0$ and $c=$ blue)
			\label{pack:c1}}
		\State{If $p$ was packed into a new bin, $\p$ is medium and $\redfrac_i>0$, give $\p$ \textbf{no} color, else give it the color $c$. 	
			\label{pack:color}}
	\end{algorithmic}
\end{algorithm}

When an item arrives, in many cases, we cannot postpone assigning it a color, since a $c$-open or $c$-compatible bin is already available (see
lines \ref{pack:open}--\ref{pack:compatible} of $\Pack(p,c)$).
Additionally, if we are about to color an item blue because currently $\reditems \ge \lfloor\redfrac_i n^i \rfloor$, we check whether a suitable large item has
arrived earlier. We deal with this case in lines \ref{eh:early1}--\ref{eh:early2}
of {\EHarm}.
In this special case, we \emph{ignore} the value $\redfrac_i$.
We pack the medium item with the large item as if it were red (no further item will be packed into this bin), but we \emph{do not count} it towards the total number of existing medium items of its type; instead we label it a \textbf{bonus item}. Bonus items do not have a mark or color,
but this can change later during processing in the following two cases.
%
%

\begin{enumerate}
\item
Additional items of type $i$ arrive which are packed as blue items. If enough of them arrive (so that it is time to color an item red again), we first check in line \ref{eh:fixbonus} of {\EHarm} if there is a bonus item of type $i$ that we could color red instead. If there is, we will do so, and pack the new item as a blue item.
\item
An item of some type $j$ and size at most $1/3$ arrives, that should be colored red.
In this case, for our accounting, we view the bonus item as $\redfit_j$ red items of type $j$, and adjust the counts accordingly in lines \ref{eh:fixbonus1}--\ref{eh:fixbonus2}
of \EHarm.\footnote{Note that the meanings of $i$
and $j$ are switched in the description of the algorithm for reasons of presentation.}
The new item of type $j$ is packed as a blue item in line \ref{eh:fixbonus3} of 
\EHarm.\footnote{Strictly speaking, we only need this whole procedure if type $j$ is compatible with the bonus item, to avoid the case in Figure \ref{subfig:postproc-b}.
Instead, we do it for all small items for simplicity.}
\end{enumerate}

It can be seen that blue items of size at most $1/3$ are packed as in {\SuperH}.
For red items of size at most $1/3$, we deal with existing bonus items in lines \ref{eh:fixbonus1}--\ref{eh:fixbonus2} of \EHarm, and in line \ref{pack:compatible} of $\Pack(\p,c)$, an existing medium item may be colored red or blue (the opposite of the parameter $c$).
Otherwise, the packing proceeds as in {\SuperH} for these items as well.

\subsection{Properties of {\EHarm} algorithms}

All {\EHarm} algorithms are required to satisfy the following properties.
The first two easy properties also hold for {\SuperH} and the third and fourth property hold for \Hpp{} (but not necessarily for all \SuperH{} algorithms).
Let $\eps=t_\rTypes$.
\begin{aproperty}[Lemma 2.1 in Seiden~\cite{Seiden02}]
	\label{prop:nextfit}
	Each bin containing items of type $N$, apart from possibly the last one, contains items of total size at least $1-\eps$.
\end{aproperty}
\begin{aproperty}\label{prop:leaves-needs}
	For any type $i$, if $\needs(i)>0$, then 
	${\leaves(i)} < \needs(i)$. 
\end{aproperty}
\begin{proof}
If $\needs(i) > 0$, then $\redfrac_i>0$.
If $\leaves(i)\ge\needs(i)>0$, an additional item of type $i$ could be placed in the space
$\redspace_{\needs(i)}\le \redspace_{\leaves(i)}$, which means we could fit $\bluefit_i+1$ blue items of type $i$ into one bin,
contradicting the definition of $\bluefit_i$.
\end{proof}

\begin{aproperty}
	\label{prop:redspace}
	If $j$ is a small type with $\redfrac_j>0$, $\redspace_{\needs(j)} \le 1/3$.
	If $i$ is a medium type, then
	$\redspace_{\leaves(i)} < 1/3 < \redspace_{\needs(i)}$. 
	If $i$ is a large or huge type, then $\redfrac_i=0$, so $\reditems=0$ at all times.
\end{aproperty}
\begin{aproperty}
	\label{prop:tidelta}
	For $x>1/3$, we have $x\in\D$ if and only if $\exists i: x=t_i$.
\end{aproperty}

\begin{aproperty}
	\label{prop:redfrac}
	We have $\redfrac_i < 1/3$ for all types $i$.
\end{aproperty}

\begin{aproperty}
	\label{prop:type12N}
	We have $t_1=1, t_2=2/3, t_3=1/2,$ and $\redfrac_1=\redfrac_2=\redfrac_N=0$.
	All type 1 items (i.e., huge items) and type $N$ items are packed in pure blue bins.
	We have $1/3\in\D$.
\end{aproperty}

This property implies that $\leaves(1)=0$ and $\redspace_{\leaves(2)}=1/3$.
This means that an unmixed bin with a large item is never red-compatible
with a medium item via Condition 1 of Definition \ref{def:comp}
(so only Condition 2 is relevant for this combination).
This furthermore implies that for a medium item of type $i$, the precise value
$\needs(i)$ is irrelevant for the algorithm (only the fact that $\redspace_{\needs(i)}>1/3$ is relevant).
It will nevertheless be useful for the analysis to have $t_i\in\D$
as required by Property \ref{prop:tidelta}.

\begin{aproperty}\label{prop:q1unique}
	Let $t_i, t_{i+1}$ be two consecutive medium type thresholds of the algorithm. Then $t_i-t_{i+1}<t_N<1/100$.
\end{aproperty}

\begin{aproperty}
	For each type $i$ and color $c$, at any time, there is at most one $c$-open bin that contains items of type $i$ and no other type.
	For each pair of types and color $c$, at any time, there is at most one $c$-open bin with items of those types.
\end{aproperty}
\begin{proof}
	All bonus items are in mixed bins; such a bin remains mixed if its bonus item gets labeled with a different type.
	Consider an item of type $i$ and color $c$.
By the order in which {\Pack} tries to place items into bins, we only open a new unmixed or pure blue bin of type $i$ if no $c$-open bin is available, so the first claim holds. 

Now consider a pair of types. Say the blue items are of type $i$ and the red items are of type $j$. 
The only cases in which a mixed bin with such items is created are the following:
\begin{itemize}
\item A red item of type $j$ is placed into an unmixed bin $B$ with blue items of type $i$. In this case, there was no existing 
red-open mixed or unmixed bin with red items of type $j$.
\item A blue item of type $i$ is placed into an unmixed bin $B$ with red items of type $j$.
In this case, there was no existing blue-open mixed, unmixed or pure blue bin with blue items of type $i$.
\item A bin receives a bonus item in line \ref{eh:bonus} of {\EHarm} and is now considered mixed.
\item A bonus item gets counted as items of type $j$ in lines \ref{eh:fixbonus1}-\ref{eh:fixbonus2}.
\end{itemize}

At the beginning, there are zero open bins with items of type $i$ and type $j$.
Such bins are only created via one of the cases listed above. 
In the last two cases, no open bins are created (note that only one medium and one large item can be packed together in a bin).
In the second case, $B$ is the only red-open bin with these types (if $\redfit_i>1$, that is), and no
red items of type $j$ are packed into unmixed bins with blue items until $B$ contains $\redfit_i$ type $i$
items by line \ref{pack:open} of Algorithm \ref{alg:pack}. In the first case, similarly, no new blue item of type $i$ will be packed into unmixed bins with red items
as long as $B$ remains blue-open.
\qed\end{proof}

\begin{aproperty}
\label{prop:redok}
At all times, for each type $i$, $\reditems\ge\lfloor\redfrac_i\allitems\rfloor-1$.
For each medium type $i$, 
$\reditems\le \lfloor\redfrac_i\allitems\rfloor$.
For each small type $i$, $\reditems\le \lfloor\redfrac_i\allitems\rfloor+\redfit_i$.
\end{aproperty}
\begin{proof}
The first bound follows from the condition in line \ref{eh:centralif} of {\EHarm} and because $\allitems$ increases by at most $1$ in between two consecutive times that this condition is tested, unless a bonus item is labeled in lines \ref{eh:fixbonus1}--\ref{eh:fixbonus2} of {\EHarm}; but in that case, the fraction of red items of type $i$ only increases, because
$\allitems$ and $\reditems$ increase by the same amount.

The upper bounds follow because for each medium type $i$, $\reditems$ increases by at most 1 when $\reditems<\lfloor\redfrac_i\allitems\rfloor$ and a new item of this type arrives: either in line \ref{eh:fixbonus2} ($\redfit_i=1$ for medium items) or in
line \ref{eh:fixbonus4}. Furthermore, if $\reditems=\lfloor\redfrac_i\allitems\rfloor$, $\reditems$ is not increased anymore.
If a bonus item is created, $\allitems$ and $\reditems$ are unchanged (lines \ref{eh:ni+1} and \ref{eh:ni-1}).
For small items, $\reditems$ increases by at most $\redfit_i$ in one iteration (line \ref{eh:fixbonus2}), and this only happens if the ratio is too low (line \ref{eh:centralif}).
\qed\end{proof}

Recall that $\reditems$ is not always the true number of medium red items of type $i$, as some of these may not have a color yet.
For a small type $i$, the value $\reditems$ may also not be accurate, because it may include some bonus items. We will fix this in postprocessing, where we replace the bonus items by items of type $i$ to facilitate the analysis.

\begin{aproperty}
At all times, for each type $i$ that is not medium, $\bonus^i=0$.
\end{aproperty}

\begin{aproperty}
\label{prop:nocolor}
Each bin with an uncolored item contains only that item.
\end{aproperty}
\begin{proof}
By line \ref{pack:compatible} of the {\Pack} method, as soon as a bin becomes mixed, the colors of its items are fixed. 
By line \ref{pack:open} of the {\Pack} method, an unmixed bin with an uncolored item does not receive a second item of the same type.
\qed
\end{proof}

In particular, no bin which contains an uncolored item is a mixed bin.
The following important invariant generalizes a result for {\SuperH}
(which is 
easy to see for that algorithm).

\begin{invariant}
\label{inv:unmixed}
If there exists an unmixed bin with red items of type $j$,
then for any type $i$ such that $\needs(j)\le \leaves(i)$, 
there is no bin with a bonus item of type $i$  
and no unmixed bin with blue items of type $i$.
\end{invariant}
\begin{proof}
As long as an unmixed red bin
with items of some type $j$ exists, no unmixed blue bin with items of type $i$ for which
$\needs(j)\le \leaves(i)$ can be opened and vice versa (line \ref{pack:compatible} of \Pack).

Now assume for a contradiction that there is an unmixed red bin with red items of type $j$ (denote the first item in this bin by $\f$) and a bin with a bonus item $\mathfrak b$ of type $i$.
Assume $\mathfrak b$ arrived before $\f$. Consider the point in time where $\f$ arrived. After deciding that $\f$ should be colored red in line \ref{eh:centralif} of \EHarm, we would have found that the second part of the condition in line \ref{eh:fixbonus}  of {\EHarm} is true, and as a consequence would have made $\mathfrak b$ no longer be bonus, a contradiction to our assumption.

Now assume that $\f$ arrived before $\mathfrak b$. In this case, either $\f$ or the large item $\mathfrak L$ that is packed with $\mathfrak b$ arrived first. (Note that $\mathfrak b$ definitely arrived after $\mathfrak L$, or it would not have been made bonus.)
Now $s(\mathfrak L)<2/3$ since $\mathfrak L$ was packed with the medium item $\mathfrak b$. But $1/3> \redspace_{\leaves(i)}\ge \redspace_{\needs(j)}$
by Property \ref{prop:redspace} and the assumption of the lemma.
Hence, regardless of which item among $\f$ and $\mathfrak L$ arrived first, the algorithm does not pack them in different unmixed bins; the second arriving item would be packed at the latest by line \ref{pack:compatible} of {\Pack}.
\qed\end{proof}

\subsection{Marking the items}
\label{sec:mark-and-color}

\begin{definition}
A \emph{critical bin} for an {\EHarm} algorithm is a bin used in the optimal solution that contains a pair of items,  one of a medium type $j$ ($t_j\in(1/3,1/2]$) and one of a large type $i$ ($t_i\in(1/2,2/3]$) such that $t_j+t_i>1$ but $t_{j+1} + t_{i+1}<1$.
\end{definition}
An example was given in Fig.~\ref{fig:ramanan}.
By marking the medium items, we keep track of how many red and blue items of a given type $j$ are in mixed bins.
Blue medium items in mixed bins imply the existence of compatible small items in the input
(which need to be packed somewhere in the optimal solution).
Red medium items in mixed bins means that the algorithm managed to combine at least some pairs of medium and large items
together into bins. In both cases, we have avoided the situation where the offline packing consists \emph{only} of
critical bins, whereas the online algorithm did not create \emph{any} bins which contain a large and a medium item.
%
\begin{figure}[t]
	\begin{center}
	\subfloat
		[Items get mark $\R$: uncolored items and a red item in a mixed bin. The bins with blue $\R$-items will receive an additional blue item of the same type before any new bin is opened for this type.]
		{\includegraphics[width=\textwidth]{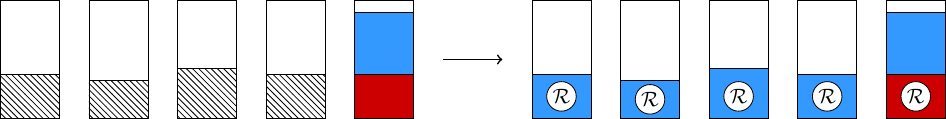}}
	\\
	\subfloat
		[Items get mark $\B$: a single uncolored item and blue items (in pairs) in mixed bins.]
		{\includegraphics[width=\textwidth]{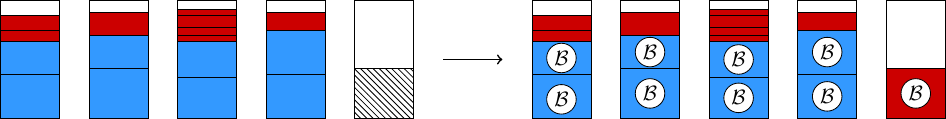}}
	\\
	\subfloat
		[Items get mark $\N$: a set of uncolored items. The bins with blue $\N$-items will receive an additional blue item of the same type before any new bin is opened for this type. See Fig. \ref{illus}.]
		{\includegraphics[width=\textwidth]{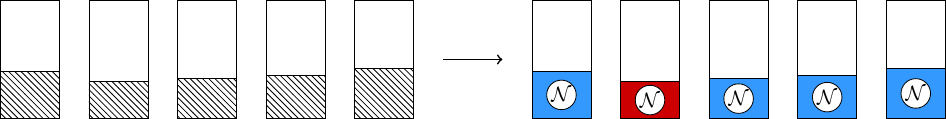}}
	\end{center}
	\caption{Marking the items.
	For simplicity, we have taken $\redfrac_i=1/9$ here (where $i$ is the type of the medium items).
	\label{fig:marking}}
\end{figure}
We use three different marks, which together cover all the cases. Our marking is illustrated in Fig. \ref{fig:marking}. 
\begin{description}
\item[$\R$] For any medium type $j$, a fraction $\redfrac_j$ of the items marked $\R$
are red, and all of these
\textbf{red} items are packed into mixed bins (i.e., together with a large item).
\item[$\B$] For any medium type $j$, a fraction $\redfrac_j$ of the items marked $\B$
are red, and the \textbf{blue} items are packed into mixed bins (i.e., together with small red items).
\item[$\N$] For any medium type $j$, a fraction $\redfrac_j$ of the items marked $\N$
are red, and \textbf{none} of the red and blue items marked $\N$ are packed into mixed bins.
\end{description}

The algorithm {\MarkItems} is defined in Algorithm \ref{alg:mark}. 
For a given type $i$ and set $\M \in \{\N,\B,\R\}$,
denote the number of red items by $\reditems(\M)$,
and the total number of items by $\allitems(\M)$. 
Algorithm \ref{alg:mark} is run every time after an item has been packed, 
and for every medium type $i$ for which $\redfrac_i>0$ separately.
It divides the medium items into three sets $\N, \B$ and $\R$ (see Fig. \ref{fig:marking}). 
Once assigned, an item remains in a set
until the end of the input (after which it may be reassigned, see Section \ref{sec:final}). In many cases, the algorithm will have nothing to do, as none of the conditions hold.
Therefore, some items will remain temporarily unmarked, in a set $\U$. 
The set $\U$ does not contain the bonus items (in fact none of the sets does).

\begin{algorithm}[!t]
\caption{\label{alg:mark}
The algorithm {\MarkItems} as applied to \emph{medium} items of type $i$ for which $\redfrac_i>0$.}
\begin{algorithmic}[1]
\If {there is an unmarked blue item $\p_1$ in a bin with a marked blue item $\p_2$}
\State Give $\p_1$ the same mark $\M$ as $\p_2$. \label{mark:1}
\State $\allitems(\M) \gets \allitems(\M)+1$ \label{mark:2}
\EndIf
\State{Let $\blue_\R$ be the minimum integer value such that
$\lfloor (\allitems(\R)+ 2\blue_\R+1)\redfrac_i\rfloor > \reditems(\R).$}
\If{there exist $\blue_\R$ uncolored non-bonus items and one unmarked
 red or bonus item in a mixed bin\label{mark:newr1}}
\State{Mark these $x_\R+1$ items $\R$. If there is a choice of items in mixed bins, use a bonus item\par if possible
	and color it red. Color the (other) uncolored items blue.\par
	If a bonus item is used, $\bonus^{i} \gets \bonus^{i}-1$.
\label{mark:newr2}}
\State{$\allitems(\R) \gets \allitems(\R)+\blue_\R+1,$ 
$\reditems(\R) \gets \reditems(\R)+1$ \label{mark:newr3}}
\EndIf
\State{Let $\blue_\B$ be the minimum integer value such that
$\lfloor (\allitems(\B) + \blue_\B +1)\redfrac_i \rfloor > \reditems(\B)$.}
\If
{there exists an uncolored non-bonus item and a set of mixed bins with 
two unmarked blue items each, which contains a number
$\blue'_\B\in\{\blue_\B, \blue_\B+1\}$ of blue items in total, \label{mark:newblue1}} 
\State{Mark these $\blue'_\B$ items $\B$ and color the uncolored item red. \label{mark:newblue2}}
\State{$\allitems(\B) \gets \allitems(\B)+\blue'_\B+1$,  $\reditems(\B) \gets \reditems(\B)+1$}
\EndIf
\State{Let $\blue_{\N}$ be the minimum integer value such that 
$\lfloor(\allitems(\N)+2\blue_{\N}+1)\redfrac_i \rfloor > \reditems(\N)$.}
\If
{there exist $\blue_{\N}+1$ uncolored non-bonus items
\label{mark:check}}
\State{Mark the $\blue_{\N}$ largest uncolored items and the single smallest uncolored item $\p$\par with the mark $\N$. Color $\p$ red and the other $\blue_{\N}$ items blue.\label{mark:newind1}}
\State{$\allitems(\N) \gets \allitems(\N)+\blue_{\N}+1$, $\reditems(\N) \gets \reditems(\N)+1$\label{mark:newind5}}
\EndIf
\end{algorithmic}
\end{algorithm}


Line \ref{mark:newind1} of {\MarkItems} ensures the following property, which was the point of postponing the coloring.
Recall that early items are blue $\N$-items which did not get their color immediately and were
packed one per bin (each late item is packed in a bin that already contains an early item).

\begin{aproperty}
	\label{prop:first}
	Each early $\N$-item is at least as large as the red $\N$-item that
	received its mark in the same iteration of \MarkItems.
\end{aproperty}

After all items have arrived and after some post-processing, we will have 
\begin{equation}
\label{eq:rightfraction}
|\reditems(\M) - \allitems(\M)\cdot \redfrac_i |  = O(1)\mbox{ for }\M\in\{\N, \B, \R\}\mbox{ and }
\mbox{ each medium type }i.
\end{equation}
Each item will be marked according to the set to which it (initially) belongs.
We will see that the values $\blue_\R, \blue_\B$ and $\blue_{\N}$ in {\MarkItems} are calculated in such a way that
$\reditems(\M) = \lfloor \allitems(\M)\cdot \redfrac_i \rfloor$ holds just before any assignment
to $\M\in\{\N, \B, \R\}$.
The proof is straightforward. 

Note that {\MarkItems} never changes the values $\reditems$ and $\allitems$.
As we saw, the value $\reditems$ may be inaccurate for some types in any event.
This will be fixed for small types in post-processing, whereas for medium types we will prove (\ref{eq:rightfraction}).
Of course, {\MarkItems} does change values
$\reditems(\M)$ and $\allitems(\M)$ for $\M\in\{\R,\N,\B\}$ in order to record how many
items with each mark there are (and these values \emph{will} be accurate).


\begin{lemma}
\label{lem:marksNR}
Let $\M\in\{\N,\R\}$.
Just before assignments of new items to $\M$ in lines \ref{mark:newr2}-\ref{mark:newr3} or lines \ref{mark:newind1}-\ref{mark:newind5},
for each medium type $i$ such that $\redfrac_i>0$,
we have $\reditems(\M) = \lfloor \redfrac_i \allitems(\M)\rfloor$
and $
\blue_\M < 1/(2\redfrac_i)+1/2$. Generally, we have 
$\reditems(\M) \in [\lfloor \redfrac_i\allitems(\M)\rfloor,$ $\lfloor \redfrac_i\allitems(\M)\rfloor+1]$. 
\end{lemma}
\begin{proof}
Call the assignment of new items to $\M$ due to lines \ref{mark:newr2}-\ref{mark:newr3} or lines \ref{mark:newind1}-\ref{mark:newind5} \emph{early assignments}. 

At the beginning, we have $\reditems(\M)=\allitems(\M)=0$.
Thus the lemma holds at this time.
When an early assignment takes place, 
$\allitems(\M)$ increases by $\blue_\M+1$, and $\reditems(\M)$ by 1.
By minimality of $x_\M$, just before any early assignment we have
\begin{align}
\label{xbound1}
&&\lfloor (\allitems(\M)+ 2(\blue_\M-1)+1)\redfrac_i\rfloor & \le \reditems(\M)\\
\nonumber
\Rightarrow &&
(\allitems(\M)+ 2(\blue_\M-1)+1)\redfrac_i &< \reditems(\M) +1\\
\nonumber
\Rightarrow &&
(\allitems(\M)+ 2 \blue_\M+1)\redfrac_i &< \reditems(\M) +1 +2\redfrac_i\\
\nonumber
\Rightarrow&&
\lfloor (\allitems(\M)+ 2\blue_\M +1)\redfrac_i\rfloor&\le \reditems(\M)+1\\
\nonumber
\Rightarrow&&
\lfloor (\allitems(\M)+ 2\blue_\M +1)\redfrac_i\rfloor&= \reditems(\M)+1
\end{align}
where we have used Property \ref{prop:redfrac} and integrality in the penultimate line and
the definition of $x_\M$ in the last line.
This immediately implies that right after an early assignment to $\M$,
\begin{equation}
\lfloor(\allitems(\M)+\blue_\M) \redfrac_i\rfloor = \reditems(\M). \label{eq:mark-after-assign}
\end{equation}
There are then $x_\M$ bins with one early blue medium item of type $i$. 
{\EHarm} will put
the next arriving blue items of this type into these $x_\M$ bins (one additional item per bin) before opening any new bins.
All of these late blue items are assigned to $\M$ and $\allitems(\M)$ is increased accordingly in lines \ref{mark:1}--\ref{mark:2},
so eventually $\reditems(\M) = \lfloor \redfrac_i \allitems(\M)\rfloor$.

After that, $\allitems(\M)$ and $\reditems(\M)$ remain unchanged until the next early assignment of items to $\M$. Hence before an early assignment of items to $\M$, the first claimed equality holds.

This equality together with (\ref{xbound1}) 
gives $\lfloor \allitems(\M)\redfrac_i + (2(\blue_\M-1)+1)\redfrac_i \rfloor \le \lfloor \allitems(\M)\redfrac_i \rfloor$
which implies $(2(\blue_\M-1)+1)\redfrac_i<1$ 
and thus $x_\M < 1/(2\redfrac_i)+1/2 < 1/\redfrac_i$ since $\redfrac_i<1$.
This, together with (\ref{eq:mark-after-assign}), implies $\reditems(\M)<\lfloor (\allitems(\M) \rfloor +1$ (note that $\reditems(\M)$ is largest relative to $\allitems(\M)$ right after an early assignment to $\M$, i.e., when (\ref{eq:mark-after-assign}) holds).
\qed\end{proof}

\begin{corollary}
	\label{cor:uncolored}
	After each execution of \MarkItems{} and for each medium type  $i$ such that $\redfrac_i>0$, $\allitems(\U)\le 1/\redfrac_i$.
\end{corollary}
\begin{proof}
	We have $\blue_\N < 1/\redfrac_i$ and $\blue_\R < 1/\redfrac_i$ by Lemma \ref{lem:marksNR}  since $\redfrac_i<1$, so at the latest when
	$1/\redfrac_i+1$ uncolored non-bonus items exist, they are marked and colored.
\qed\end{proof}

\begin{lemma}
\label{lem:markB}
At all times and for each medium type $i$ such that $\redfrac_i>0$, $\reditems(\B) = \lfloor \redfrac_i\allitems(\B)\rfloor$
and $\blue_\B < 1/\redfrac_i$. 
\end{lemma}
\begin{proof}
We use similar calculations to the proof of Lemma \ref{lem:marksNR}.
At the beginning, all counters are zero. When {\MarkItems} is about to assign items to $\B$, 
we have $\lfloor \redfrac_i(\allitems(\B) + \blue_\B +1)\rfloor = \lfloor \redfrac_i(\allitems(\B) + \blue_\B +2)\rfloor =\reditems(\B)+1$
by definition of $x_\B$ and Property \ref{prop:redfrac}.
This immediately implies that after each assignment,
we have $\lfloor \redfrac_i\allitems(\B) \rfloor = \reditems(\B)$. 
By minimality of $x_\B$, we also conclude $\lfloor \redfrac_i(\allitems(\B) + \blue_\B )\rfloor = \reditems(\B)$, so  $\blue_\B \cdot \redfrac_i < 1$.
\qed\end{proof}

\section{Post-processing}
\label{sec:post}
Since we consider only the asymptotic competitive ratio in this paper, 
it is sufficient to prove that a certain ratio holds for 
all but a constant number of bins: such bins are counted in the additive constant.
We will perform a sequence (of constant length) of removals of bins in this section.
We will also change the marks of some items to better reflect the actual output, 
fix the type and color of any remaining bonus items and reduce the sizes of some items
to match the values used by {\EHarm} in its accounting (see line \ref{eh:countasi} of Algorithm \ref{alg:eh}).


To begin with, we remove the at most $\sum_{i: \redfrac_i>0} 1/\redfrac_i$ bins with unmarked medium items
(Corollary \ref{cor:uncolored}), but not the bonus items.
We also remove (at most $\rTypes-1$) blue-open pure blue bins,
as well as the single bin with items of type $\rTypes$ of total size at most $1-\eps$, if it exists (see Property \ref{prop:nextfit}).
Additionally, we remove any bins with a \emph{single} blue $\N$- or $\R$-item, as well as all bins that were assigned to $\N$ and $\R$ at the same
time as such bins (i.e., during one execution of lines \ref{mark:newblue2} or \ref{mark:newind1} of {\MarkItems}).
This is at most $\sum_{i: \redfrac_i>0} (1+1/\redfrac_i)$ bins
by Lemma \ref{lem:marksNR}.
Overall we have removed at most a constant number of bins so far.
The packing now has the following property.
This and subsequent Packing Properties will continue to hold during post-processing and will be the basis of our proof of the competitive ratio.

\begin{pproperty}
\label{pp:twoblue}
All medium non-bonus items are marked.
Each blue item in $\N,\R$ and $\B$ is packed in a bin that contains two blue items,
and $|\reditems(\M) - \allitems(\M)\cdot \redfrac_i  | = O(1)\mbox{ for }\M\in\{\N, \B, \R\}$ and 
each medium type $i$.
All bins with blue $\B$-items or red $\R$-items are mixed. 
All bins with items of type $N$ are at least $1-\eps$ full.
\end{pproperty}
\begin{proof}
Lines \ref{mark:newblue2} or \ref{mark:newind1} of {\MarkItems} are only executed if all blue items that were
assigned to $\N$ or $\R$ in a previous run of {\MarkItems}
are already packed into bins with two blue items, since 
Algorithm {\Pack} prefers to pack a new blue item into an existing blue-open bin.
Thus, when we remove all bins with single $\N$- or $\R$-items, this is only constantly many bins.
The blue $\B$-items are packed two per bin by the rules of \MarkItems.
The equality then follows from Lemmas \ref{lem:marksNR} and \ref{lem:markB}. 
The penultimate line follows from the way {\MarkItems} selects the items to mark. The last line follows from Property \ref{prop:nextfit}.
\qed\end{proof}

\paragraph{Final marking}
\label{sec:final}

\begin{figure}[t!]
	\begin{center}
		\includegraphics[width=0.8\textwidth]{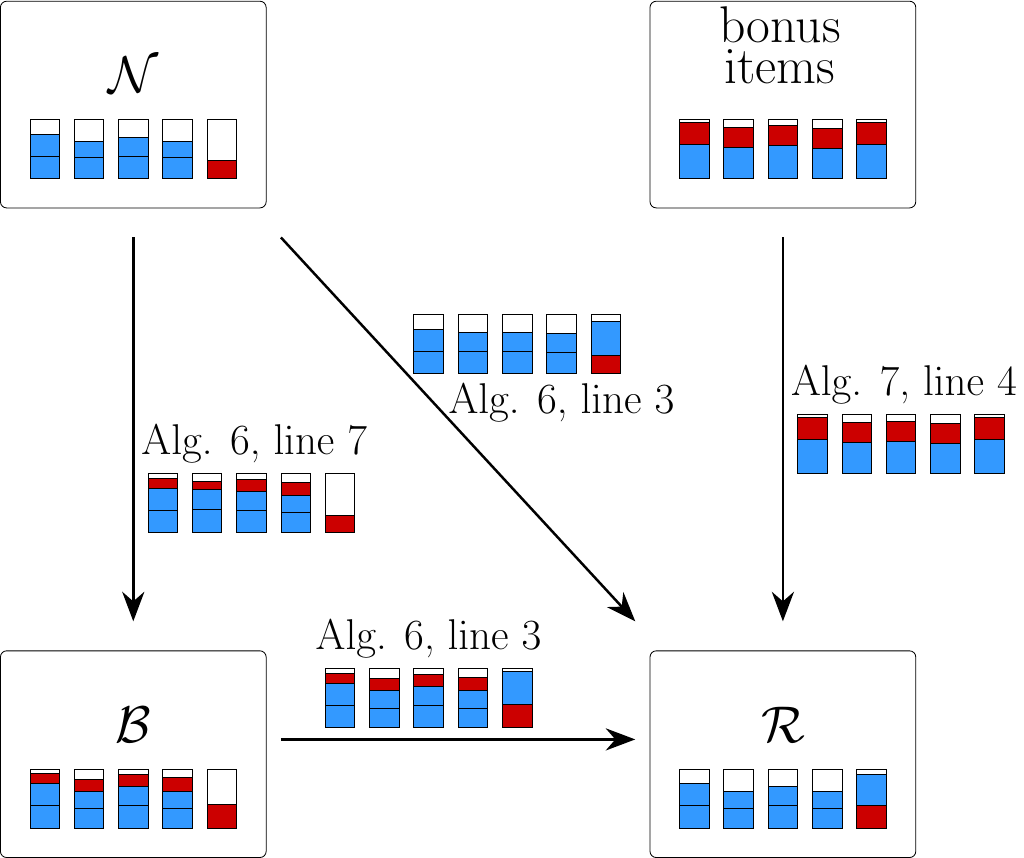}
		\caption{\label{fig:final}Reassigning marks after the input is complete and changing some items to get rid of bonus items.
		Items are sorted into their correct sets whenever possible, updating the marks
		that they received while the algorithm was running. Some item sizes are reduced (!).
		The bins next to the arrows indicate what sets of bins are being reassigned.
		}
	\end{center}
\end{figure}

An overview of our changes of marks and sizes
is given in Fig. \ref{fig:final}. 
%
We will change marks of some items to $\R$ or $\B$ if
such marks are appropriate. To do this, we run Algorithm \ref{alg:final}
for every medium type $i$ separately.
Note that seemingly wrongly marked items like the ones we look for in
Algorithm \ref{alg:final} can indeed exist because while the algorithm is
running we only mark
each item once, when it is assigned to a set; other items could arrive later
and be packed with it, invalidating its mark.
Packing Property \ref{pp:twoblue} is not affected by Algorithm \ref{alg:final},
since we change marks in the correct proportions, and we only add items to $\R$ and $\B$
that satisfy Packing Property \ref{pp:twoblue}. 

Instead of the process described in Algorithm \ref{alg:final}, an easier approach might seem to be
the following. For changing marks from $\N$ to $\R$, we could simply take the group of bins containing 
the $\N$-items that received their mark at the same time as the red $\N$-item in the mixed bin.
The problem with this approach is that not all these groups have the same size in general, since $\blue_\N$ may vary. 
This means the ratio $\redfrac_i$ would possibly not
be maintained for $\R$ (and then also not for $\N$).

\begin{algorithm}[t]
\caption{\label{alg:final}
Final marking for items of type $i$ in {\EHarm} algorithms. Again we only
consider items of medium type $i$.}
\begin{algorithmic}[1]
\State{Sort the bins with two blue $\N$-items in order of increasing size of the early $\N$-items in these bins.}\label{final:sort}
\For{$\M=\{\N,\B\}$}
\State{Let $T$ be the largest integer value such that there exist}
	\begin{itemize}
		\item $\lfloor \redfrac_i T \rfloor$ red $\M$-items in mixed bins (one per bin) and
		\item $(T - \lfloor \redfrac_i   T \rfloor)/2$ bins with (two) blue $\M$-items (so $(T - \lfloor \redfrac_i   T \rfloor)/2\in\mathbb N$)
	\end{itemize}
\State{Assign the $\lfloor \redfrac_i T \rfloor$ largest red $\M$-items in mixed bins and the blue $\M$-items in the first\par
	$(T - \lfloor \redfrac_i   T \rfloor)/2$ bins in the sorted order to $\R$ \label{final:assign-to-r}}
\EndFor
\State{Let $T$ be the largest integer value such that there exist}
	\begin{itemize}
		\item $\lfloor \redfrac_i T \rfloor$ red $\N$-items 
		\item $(T - \lfloor \redfrac_i   T \rfloor)/2$ mixed bins with (two) blue $\N$-items
	\end{itemize}
	\State{Assign the $\lfloor \redfrac_i T \rfloor$ largest red $\N$-items and the blue $\N$-items in the first
		$(T - \lfloor \redfrac_i   T \rfloor)/2$ mixed bins in the sorted order
	 that were not yet reassigned to $\R$, to $\B$}
\end{algorithmic}
\end{algorithm}

\begin{pproperty}
\label{pp:allesOK}
No bins with items in $\N$ are mixed. 
No bins with \emph{red} items in $\B$ are mixed.
\end{pproperty}

\begin{lemma}
	\label{lem:pp12}
	After running Algorithm \ref{alg:final}, only constantly many bins need to be removed in order to ensure that Packing Property \ref{pp:allesOK} holds.
	Packing Property (\ref{pp:twoblue}) is maintained.
\end{lemma}

\begin{proof}
Let us fix a medium type $i$.
After the first loop is finished, there can be at most constantly many red $\N$-items and $\B$-items in mixed bins, 
since these sets of items are both colored with the correct proportion of
red items by Packing Property \ref{pp:twoblue} and we move a maximal subset of items with the correct proportion to $\R$. 
After Algorithm \ref{alg:final} completes, there are at most constantly many blue $\N$-items in mixed bins
for the same reason.
We can remove all of these bins at the end if needed. This does not affect Packing Property \ref{pp:twoblue}.
\qed\end{proof}

The following lemma helps us to bound the optimal solution later.

\begin{lemma}
\label{lem:smallestred1}
Let the smallest medium red item of type $i$ in $\N$ be $\smallestred_i$. 
It is packed alone in a bin.
At most 
$\frac{1-\redfrac_i}2 (\allitems(\R)+\allitems(\N))
+O(1) $
items in $\N$ have size less than $s(\smallestred_i)$.
\end{lemma}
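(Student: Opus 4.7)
The plan is to prove the two assertions separately. For the first, observe that $\smallestred_i \in \N$ lies in an unmixed bin by Theorem \ref{thm:alphaI}. When $\smallestred_i$ was marked $\N$ it was the item $\p$ in some execution of lines \ref{mark:newind1}--\ref{mark:newind4} of {\MarkItems}, so at that moment it was provisionally colored and alone in its bin by Property \ref{prop:provisional}. For a medium type $i$ the framework chooses $\redspace_{\needs(i)} \le t_i$ (using Property \ref{prop:tidelta}), so $\redfit_i = 1$; hence no further red item of type $i$ can be added via line \ref{pack:open} of \Pack, and $\smallestred_i$ remains alone in its bin.

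For the counting bound, I would decompose $\N$ according to which execution of lines \ref{mark:newind1}--\ref{mark:newind4} of {\MarkItems} marked each item. From the inductive analysis in the proof of Lemma \ref{lem:alphanone}, each such execution (a \emph{cycle}) $k$ contributes exactly $2\blue_{\N,k}+1$ items to $\N$: one red item $\p_k$ (the smallest provisional at that time), $\blue_{\N,k}$ \emph{first-round} blue items forming the set $S_k$ of the largest provisional items, and $\blue_{\N,k}$ \emph{second-round} blue items that arrive later and are added to the $S_k$-bins through line \ref{mark:1}. Crucially, the color swaps in lines \ref{mark:newind2}--\ref{mark:newind3} never change which items belong to $S_k \cup \{\p_k\}$, only their final colors, so the size ordering used to select them is preserved.

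By minimality of $\smallestred_i$, every $\p_k$ satisfies $s(\p_k) \ge s(\smallestred_i)$; and because $S_k$ consists of the largest provisional items at cycle $k$ while $\p_k$ is the smallest, every item of $S_k$ has size $\ge s(\p_k) \ge s(\smallestred_i)$. Thus cycle $k$ contributes at least $\blue_{\N,k}+1$ items of size $\ge s(\smallestred_i)$ (we cannot claim anything about the second-round blues, which is exactly why the bound is $(1+\redfrac_i)/2$ and no larger). Setting $c = \reditems(\N)$ and $B = \sum_k \blue_{\N,k}$, summing over cycles gives a count of at least $B + c = (\allitems(\N) + c)/2$. By (\ref{eq:rightfraction}), which holds after post-processing, $c \ge \redfrac_i\,\allitems(\N) - 1$, so the count is at least $\frac{1+\redfrac_i}{2}\allitems(\N) - \frac{1}{2} = \bigl(\redfrac_i + (1-\redfrac_i)/2\bigr)\allitems(\N) - \frac{1}{2}$, comfortably within the $-2$ slack of the claim.

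The main obstacle is the bookkeeping around the cycle decomposition: one must verify that after the color swaps and the post-processing of Section \ref{sec:post} (which removes at most a constant number of bins), the ``complete'' cycles still tile $\N$ with the structure $2\blue_{\N,k}+1$, so that the identity $\allitems(\N) = 2B + c$ actually holds. This is handled by the explicit removal of bins containing a single $\N$-item and of mixed bins with $\N$-items done in Section \ref{sec:post}, together with (\ref{eq:rightfraction}).
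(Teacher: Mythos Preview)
Your proof is correct and follows essentially the same approach as the paper: both argue that $\smallestred_i$ is alone via Theorem~\ref{thm:alphaI} (you add the useful detail that $\redfit_i=1$ for medium types, which the paper leaves implicit), and both obtain the count by observing that every red $\N$-item and every first-round blue $\N$-item is at least $s(\smallestred_i)$ since the $\blue_\N$ blue items assigned in each execution of lines~\ref{mark:newind1}--\ref{mark:newind4} dominate that execution's red item $\p$. Your cycle decomposition is just a more explicit packaging of the paper's ``red items plus first-in-bin blue items'' count; the paper arrives at the same $(\allitems(\N)+c)/2$ via $\lfloor\redfrac_i\allitems(\N)\rfloor + \lfloor(1-\redfrac_i)\allitems(\N)/2\rfloor$, which is why it needs the full $-2$ slack from two floors whereas you only spend $-1/2$.
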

\begin{proof}
Item $\smallestred_i$ is packed alone by Packing Property \ref{pp:allesOK}.
Each red $\N$-item of type $i$ has size at least $s(\smallestred_i)$ by definition of $\smallestred_i$.
Furthermore, each \emph{early} blue $\N$-item of type $i$ has size at least $s(\p)$, where
$\p$ is the red $\N$-item that got its mark at the same time (Property \ref{prop:first}).
However, it is possible that the bin containing $\p$ received an additional (large, blue) item later.
In that case, after post-processing, the item $\p$ does not have mark $\N$ anymore,
so it is not considered when determining $\smallestred_i$, and may in fact be smaller than $\smallestred_i$.
In Algorithm \ref{alg:final}, we therefore take care to always select the bins with
the smallest early blue $\N$-items (line \ref{final:sort}). 

We now give an upper bound for the number of early blue $\N$-items that can be smaller than $\smallestred_i$ but still have mark $\N$ after Algorithm \ref{alg:final} completes.
%
Let $z = \lfloor \redfrac_i T \rfloor$ be the number of red $\N$-items in mixed bins
that receive the mark $\R$ in line \ref{final:assign-to-r} of Algorithm \ref{alg:final}.
Then the total number of early items that got their mark $\N$ at the same time as these
$z$ items is upper bounded by $z/(2\redfrac_i) + z/2$ by Lemma \ref{lem:marksNR}.
We transfer in total $(T-z)/2$ early items from $\N$ to $\R$.
The number of early items that do not get transferred and are potentially smaller than $\smallestred_i$
is therefore at most  $z/(2\redfrac_i) + z -T/2 \le z$ since $T\ge z/\redfrac_i$.
Clearly, since we move $z$ red $\N$-items to $\R$,
$z$ is at most $\redfrac_i\allitems(\R)$ afterwards.

Finally, we give an upper bound for the number of late blue $\N$-items.
There are $\allitems(\N) - \reditems(\N) =
\allitems(\N) - \lfloor \redfrac_i \allitems(\N)\rfloor$ blue items in $\N$ (using Lemma \ref{lem:marksNR}).
Half of them are packed into existing bins (i.e., as late items).
We have $\frac12 (\allitems(\N) - \lfloor \redfrac_i \allitems(\N)\rfloor) \le \frac{1-\redfrac_i}2 \allitems(\N) +\frac12$.

Since $\redfrac_i < \frac{1-\redfrac_i}2$ by Property \ref{prop:redfrac}, the lemma follows.
\qed\end{proof}

\paragraph{A modification of the input}
\label{sec:mod}
In line \ref{eh:bonus} of {\EHarm}, bonus items are created. These are medium items which are packed as red items (each such item is in a bin with a large blue item) but violate the ratio $\redfrac_i$. 
Some of them may still be bonus when the algorithm has finished. 
Also, some of them may be
labeled with a different type than the type they belong to according to their size. 
We call such items \emph{reduced} items. Note that {\EHarm} treated each reduced item as small red items in its accounting (but had in fact packed the larger bonus item).
All reduced items are in mixed bins. They are not counted as bonus items.

\begin{algorithm}[ht]
\caption{\label{alg:mod}
Modifying the input after packing all items}
\begin{algorithmic}[1]
\State{Let the number of bonus items of type $i$ be $T$. \Comment{These are not reduced items}}
	\State{Color $\lfloor \frac{2\redfrac_i T}{1+\redfrac_i} \rfloor$ of these items red and the others blue. Mark them all $\R$.}\label{mod:doit-next}
	\State{Reduce the size of blue large items in the bins with (now) blue medium items of type $i$ to $t_i$.}\label{mod:shrink-large}
	\State{Mark all of these items $\R$ as well.\label{mod:mark-r}}
	\State{$\allitems(\R)\gets \allitems(\R)+2T - \lfloor \frac{2\redfrac_i T}{1+\redfrac_i} \rfloor$.}
	\State{$\reditems(\R)\gets \reditems(\R)+\lfloor\frac{2\redfrac_i T}{1+\redfrac_i} \rfloor.$}\label{mod:13}
\ForEach{reduced item $\p$}
	\State{Let $j$ be the type with which $\p$ is labeled.\label{mod:reduced1}}
	\State{Split up $\p$ into $\redfit_j$ red items of size $s(\p)/\redfit_j$.\label{mod:split}}
	\State{Reduce the size of the newly created items until they belong to type $j$.\label{mod:reduced2}}
\EndFor
\end{algorithmic}
\end{algorithm}

After {\EHarm} has finished, and the steps previously described in this sections have been applied,
we modify the packing that it outputs as described in Algorithm \ref{alg:mod}. 
Again we run this algorithm for every medium type $i$.
The post-processing is illustrated in Fig. \ref{fig:post-processing}; the process in lines \ref{mod:doit-next}--\ref{mod:13} is illustrated in Fig. \ref{subfig:postproc-a}, the process in lines \ref{mod:reduced1}--\ref{mod:reduced2} in Fig. \ref{subfig:postproc-c}.
\begin{lemma}
\label{lem:noincrease}
Denote the set of items in a given packing $P$ by $\sigma$.
Denote the set of items after applying Algorithm \ref{alg:mod} to the packing $P$ by $\sigma'$. Then $P$ induces a valid packing for $\sigma'$, and
$
\opt(\sigma')\le \opt(\sigma).
$
\end{lemma}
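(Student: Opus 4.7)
The proof has two parts: first show that the operations performed by Algorithm \ref{alg:mod} keep $P$ a valid packing of the modified item set $\sigma'$, and second show that the optimal value cannot grow.

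For the first part, I would go through Algorithm \ref{alg:mod} operation by operation and observe that each one is a ``shrink, split-in-place, or delete'' operation with respect to $P$. Concretely: relabeling bonus items as type $i$, recoloring them, and re-marking items in lines \ref{mod:doit-next}--\ref{mod:13} do not touch which bin contains which physical piece of content, so they are irrelevant to validity. The only geometric change in that block is line \ref{mod:shrink_large}, which \emph{reduces} the size of a large item that is already in a bin; the total content of that bin only decreases, so $P$ remains valid. The splitting step in line \ref{mod:split} replaces a reduced item $p$ by $\redfit_j$ items of total size $\size(p)$, all left in the same bin as $p$; the subsequent shrinking in line \ref{mod:reduced2} only decreases the total. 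Finally, the removal in line \ref{mod:remove} trivially preserves validity. Combining these observations, $P$ (with items identified according to the operations above) is a valid packing of $\sigma'$.

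For the inequality $\opt(\sigma')\le \opt(\sigma)$, I would fix an optimal packing $O$ of $\sigma$ and construct an explicit assignment of the items of $\sigma'$ to the bins of $O$. There is a natural correspondence between items of $\sigma'$ and items of $\sigma$:
\begin{itemize}
\item A shrunken large item in $\sigma'$ corresponds to its original, larger large item in $\sigma$; put it in the same bin of $O$ as the original.
\item A relabeled/recolored bonus medium item corresponds to itself; put it in the same bin as in $O$.
\item Each of the $\redfit_j$ pieces resulting from splitting a reduced item $p$ corresponds to a fraction of $p$; put all of them in the bin of $O$ that contained $p$.
\item Items removed in line \ref{mod:remove} have no counterpart and need not be placed.
\end{itemize}
In every case, the total size placed in any bin of $O$ is at most what $O$ already contained of the corresponding $\sigma$-items, because shrinking decreases size, splitting preserves total size, and the subsequent shrinking of pieces only decreases it further. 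Hence the constructed packing of $\sigma'$ is valid and uses at most $\opt(\sigma)$ bins, which proves $\opt(\sigma')\le \opt(\sigma)$.

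The only step that requires any thought is the splitting operation: one must check that when the $\redfit_j$ smaller pieces replace $p$, both in $P$ and in $O$, no bin exceeds capacity. In $P$ this is automatic because the pieces stay together with $p$'s former location and shrinking only helps; in $O$ it is automatic for the same reason. So there is no real obstacle—the lemma is essentially the statement that all of our post-processing operations are size-monotone in the strong sense that both the algorithm's packing and the optimum packing can be adapted to the new instance with no extra bins.
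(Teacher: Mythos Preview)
Your proposal is correct and follows essentially the same approach as the paper: both argue that every operation in Algorithm~\ref{alg:mod} is a shrink, split-into-smaller-total, or delete, so any packing of $\sigma$ (in particular an optimal one) yields a valid packing of $\sigma'$ with no extra bins. The only cosmetic difference is that the paper bounds the total size of the $\redfit_j$ split pieces by $\redspace_{\needs(j)}<1/3<s(p)$ directly, whereas you use the simpler observation that splitting preserves total size and the subsequent shrinking in line~\ref{mod:reduced2} can only decrease it; both arguments are valid.
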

\begin{proof}
In line \ref{mod:shrink-large} of Algorithm \ref{alg:mod}, items are only made smaller. In line \ref{mod:split}, a medium item of type $i$ is split into $\redfit_j$ items of some type $j$. 
The condition for an item to be labeled with type $j$ in line \ref{eh:fixbonus} of 
{\EHarm} is that $j$ is a small type.

By definition of $\redfit_j$ and $\redspace_{\needs(j)}$, 
we have that $\redfit_j$ items of type $j$ have
total size at most $\redspace_{\needs(j)}$. Since $j$ is a small type, this value is less than $1/3$ by Property \ref{prop:redspace}.
This means the newly created items occupy less space than the medium item that they replace. Hence, in both cases we do not increase the amount of occupied space in any bin.

The inequality follows by choosing $P$ to be an optimal packing for $\sigma$.
\qed\end{proof}

\begin{lemma}
Lemma \ref{lem:smallestred1} still holds after executing Algorithm \ref{alg:mod}.
\end{lemma}
\begin{proof}
Algorithm \ref{alg:mod} only creates new $\R$-items. Therefore, the number of ``problematic items'' that we want to upper bound, that is, the number of $\N$-items of size less than $s(\smallestred_i)$, does not increase. As we only increase $n^i(\R)$ in Algorithm \ref{alg:mod}, the upper bound in Lemma \ref{lem:smallestred1} is not decreased.
\end{proof}

\begin{theorem}
	\label{thm:red-alpha}
	For a given input $\sigma$, denote the result of all the post-processing done in this section by $\sigma'=\{\p_1,\dots,\p_n\}$. 
	Packing Properties \ref{pp:twoblue} and \ref{pp:allesOK} as well as Invariant \ref{inv:unmixed}
	still hold after post-processing.
	For any type $i$, at the end we have $|\reditems - \redfrac_i\allitems|=O(1)$,
	where $\reditems$ counts the (correct) total number of red items of type $i$ after postprocessing.
	There are no bonus items, and  the optimal cost to pack the input did not increase in post-processing.
\end{theorem}
\begin{proof}
	Let $P_0$ be the packing of $\sigma$ that is output by $\A$.
	Let $P_1$ be the packing after running Algorithm \ref{alg:final}, and let the items packed into $P_1$ be $\sigma_1$.
	Packing Properties  \ref{pp:twoblue} and \ref{pp:allesOK} hold for $P_1$  by Lemma \ref{lem:pp12}. Since in Algorithm \ref{alg:mod} we colored the bonus items in the right proportions,
	the ratio $\redfrac_i$ of red items holds for each medium type $i$ up to a constant number of items by Packing Property \ref{pp:twoblue}.
	For a small type $i$, we have $|\reditems - \redfrac_i\allitems|=O(1)$ by Property \ref{prop:redok}.
	The only effect of post-processing is that afterwards, $\reditems$ counts the actual number of red items of type $i$ in $\sigma_1$.
	(Some of these red items replace bonus items in $\sigma_1$, but the algorithm already counted them in the value $\reditems$.)
	All bonus items were removed, and Packing Properties  \ref{pp:twoblue} and \ref{pp:allesOK} remain unaffected.
	
	Consider some medium type $i$. Invariant \ref{inv:unmixed} is not affected by any change of marks or removal of bins.
	The effect of lines \ref{mod:doit-next}--\ref{mod:13} of Algorithm \ref{alg:mod} is that some bins with bonus items
	of type $i$ are replaced with unmixed bins with blue type $i$ items.
	This does not affect the validity of Invariant \ref{inv:unmixed}.	
	
	To get from $P_0$ to $P_1$, we only removed some bins (and changed marks, which are irrelevant for the optimal solution).
	Hence $\opt(\sigma_1)\le\opt(\sigma)$. We can now apply Lemma \ref{lem:noincrease} to the optimal packing for $\sigma_1$ to get the final claim.
\end{proof}

\section{Weights}
Let $\A$ be an {\EHarm} algorithm.
For analyzing the competitive ratio of $\A$, we will use the well-known technique of weighting functions. The idea of this technique is the following. We assign weights to each item such that the number of bins that our algorithm uses in order to pack a specific input is equal (up to an additive constant) to the sum of the weights of all items in this input. Then, we determine the average weight that can be packed in a bin in the optimal solution. This average weight for a single bin gives us an upper bound on the competitive ratio. In order to use this technique, we now define a set of weighting functions.

Fix an input sequence $\sigma.$
Denote the result of post-processing $\sigma$ by $\sigma'=\{\p_1,\dots,\p_n\}$. 
Let $P$ be the packing of $\sigma$ that is output by $\A$.
Let $P'$ be the packing of $\sigma'$ induced by $P$ (Lemma \ref{lem:noincrease}).

From this point on, our analysis is purely based on the structural properties of the packing $P'$ that we established in Theorem \ref{thm:red-alpha}. We view $\sigma'$ only as a set of items and not as a list. We prove in Theorem \ref{thm:ub-eh} below that this is justified. In particular, we do \emph{not} make any statement about $\A(\sigma')$, since the post-processing done in Algorithm \ref{alg:final} means that some items (e.g., the ones introduced in lines \ref{mod:reduced1}-\ref{mod:reduced2}) do not have clearly defined arrival times, and it is not obvious how to define arrival times for them in order to ensure that $\A(\sigma') = \A(\sigma)$.

The \emph{class}
of an item of type $t$ is $\leaves(t)$, if it is blue, and $\needs(t)$ if it is 
red.
The class of an item $\p$ indicates how much space is reserved for red items in the bin containing $\p$
(both if $\p$ is red and if $\p$ is blue), namely $\redspace_i$ space if the class is $i$.

\begin{lemma}
	\label{lem:allthesame}
	For $k\in\{1,\dots,K\}$, red items of class $k$ are either all medium or all small.
	If they are medium, they are of the unique type $t$ such that $k=\needs(t)$.
\end{lemma}
\begin{proof}
	By Property \ref{prop:redspace}, if for a red item of type $t$ we have
	$\redspace_{\needs(t)}>1/3$, then it is a medium item; in this case, 
	type $t$ is the only type such that
	$\needs(t)=k$ since each medium type is in a different class by Property \ref{prop:tidelta}
	and for each small item $\p$ we have $\redspace_{\needs(t(\p))}\le1/3$.
\end{proof}

The class of a bin with red items is the class of those red items.
This is well-defined, as each bin contains red items of only one type.

\begin{definition}
	\label{def:6}
	Let $k$ be the minimum class of any unmixed red bin.
	Let $\smallestred$ be a smallest item in the unmixed red bins of class $k$. 
	If all red items are in mixed bins, we define $k=K+1$ (and $\smallestred$ is left undefined).
\end{definition}

If $k\in\{1,\dots,K\}$, then by this definition we have $k=\needs(t(\smallestred))$. 
If $\redspace_k\le 1/3$, there may be several red items in one bin,
as in {\SuperH} (in {\Hpp}, there is always at most one red item per bin).
Also, there can be several types $t$ such that $k=\needs(t)$.
If $\redspace_k > 1/3$, there is only one type $t$ such that $k=\needs(t)$, and this is a medium type; it is only in this case that we need to consider the item $\smallestred$ and in particular its exact size.

We follow Seiden's proof, adapting it to take the marks into account.
%
In order to define the weight functions, it is convenient to introduce some additional types.
Note that the algorithm does not depend on the weight functions in any way.
It is also unaware of the added type thresholds.
First of all, for each $i$ such that $1/3<\redspace_i<1/2$,
we add a threshold $1-\redspace_i$ between $t_2=2/3$ and $t_3=1/2$
(see Property \ref{prop:type12N}).
For a type $t$ with upper bound $1-\redspace_i$ we define
$\leaves(t)=i$. We furthermore add a threshold $1-s(\smallestred)$ in case
$\smallestred$ is medium.
This splits an existing type into two types.
For the new type $t^1$ with upper bound $1-s(\smallestred)$, we define
$\leaves(t^1) = k$, where $k = \needs(t(\smallestred))$.
For the new type $t^2$ with lower bound $1-s(\smallestred)$, we define
$\leaves(t^2) = k - 1$.
To maintain consistency with the rest of the paper, 
we add negative indices for the types to maintain $t_3=1/2$.
That is, if there are $a$ values in $\D$ in the range $(1/3,1/2)$,
the corresponding values $1-\redspace_i$ and the threshold $1-s(\smallestred)$ 
(if $\smallestred$ is medium)
are stored in ascending order in the values $t_2,t_1,\dots,t_{2-a}$, and $t_{1-a}=2/3$, $t_{-a}=1$.

For large items, the value of $\leaves$ is only used by the algorithm to check whether \emph{small} items can be combined with them.
Moreover, for small items, the only relevant piece of information is that at least $1/3$ of space is left by large items.
An {\EHarm} algorithm defines $\leaves(2)$ such that $\redspace_{\leaves(2)}=1/3$ (and then ignores this value when considering to pack a medium item with a large item). The additional types simply make the function $\leaves$ more accurate, in particular with the threshold $1-s(\smallestred)$, which the algorithm does not know.
It can be seen that the definition of $k$ (and $\smallestred$) is not affected by these new types, as only types of large (i.e., blue) items are changed, and $k$ and $\smallestred$ are defined based on unmixed red bins.

%
%
The weights of an item $\p$ will depend on $s(\smallestred)$,
the class of the red and blue items of type $t(\p)$ relative to $k$, and the mark of $\p$.
This means we essentially define them for every possible input sequence separately.
The value of $k$ and $s(\smallestred)$ (and the marks) become clear by running the algorithm.
We do not write the dependence on $\sigma$ explicitly since we have fixed $\sigma$ in this section.

The two weight functions of an item of size $x$, type $t$ and mark $\M$ are given by Table \ref{tab:weights}. 
Recall that $\eps=t_N$.
Regarding $w_k(\p)$, non-medium items have no mark  and are handled under the case $\M\not=R$.
(Unmarked \emph{medium} items were removed in the previous section). 
Note that $w_k(\p)$ does not depend on $s(\smallestred)$ or the added types, as $\redfrac_t=0$ and $\needs(t)=0$ for all items larger than $1/2$. In contrast, $v_{k,s(\smallestred)}(\p)$ depends on $s(\smallestred)$, as the value of $\leaves(t)$ changes at the threshold
$1-s(\smallestred)$ if $\smallestred$ is medium as described above.

\begin{table}[h]
	\caption{Weighting functions of class $k$ for an item $\p$ of size $x$, type $t$ and mark $\M$.}
	\label{tab:weights}
	\centering
	\begin{tabular}{ll|rl}
		\multicolumn{2}{c|} {
			$w_k(\p)=w_k(x,t,\M)$} & 
		\multicolumn{2}{c} {
			$v_{k,s(\smallestred)}(\p)=v_{k,s(\smallestred)}(x,t)$}\\
		\hline
		$\frac{1-\redfrac_t}{\bluefit_t} + \frac{\redfrac_t}{\redfit_t}$ & if  $t<\rTypes$, $\needs(t)> k$
		& $\frac{1-\redfrac_t}{\bluefit_t} + \frac{\redfrac_t}{\redfit_t}$ & if $t<\rTypes$, $\leaves(t)< k$\\
		& or $\needs(t)=0$&&\\
		$\frac{1-\redfrac_t}{\bluefit_t} + \frac{\redfrac_t}{\redfit_t}$ & if $t<\rTypes$, $\needs(t)= k, \M\not=\R$
		& $\frac{\redfrac_t}{\redfit_t}$     & if $t<\rTypes$, 		$\leaves(t)\ge k$\\
		$\frac{1-\redfrac_t}{\bluefit_t}$   
		                                   & if $t<\rTypes$,		$\needs(t)= k, \M=\R$             
		&  $\frac{1}{1-\eps}x$               & if $t=\rTypes$ \\
		$\frac{1-\redfrac_t}{\bluefit_t}$    & if $t<\rTypes$, $0<\needs(t)< k$\\
		$\frac{1}{1-\eps}x$ & if $t=\rTypes$ &
	\end{tabular}
\end{table}

Note that $w_k$ counts all blue items, and $v_{k,s(\smallestred)}$ counts all red items. By definition of $k$ and Packing Property \ref{pp:allesOK}, we have $\M=\R$ for all items with type $t$ such that $\needs(t) < k$.
For simplicity, we ignore the markings for any 
type $t$ with $\needs(t) > k$, essentially assuming that there are no items of
such types that are marked $\R$. It is clear that this assumption can
only increase the weight of any item. 

Define $v_k(\p) = v_{k,t_{t(\smallestred)}}(\p)$. Note that for any item $\p$, we have $v_k(\p) \ge v_{k,s(\smallestred)}(\p)$ since
$1-s(\smallestred) \ge 1-t_{t(\smallestred)}$, and this is the point at which the $\leaves$ function drops below $k$.

\begin{theorem}
	\label{thm:ub-eh}
	For any input $\sigma$ and {\EHarm} algorithm $\A$, defining $k$ as above we have
	\begin{align}	
		\label{eq:ub-eh}
		\A(\sigma) &\le \min\left\{ 
		\sum_{i=1}^\rItems w_k(\p_i),\sum_{i=1}^\rItems v_{k}(\p_i)
		\right\} + O(1) 
	\end{align}
\end{theorem}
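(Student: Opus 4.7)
The plan is to adapt the weighting-function analysis of Seiden~\cite{Seiden02} to our framework, exploiting the structural guarantees established in Sections \ref{sec:mark-and-color}--\ref{sec:post}. After running $\A$ on $\sigma$ and applying the post-processing of Section~\ref{sec:post}, let $\sigma'$ denote the resulting set of items and $\A'(\sigma')$ the number of bins in the post-processed packing. Lemma~\ref{lem:noincrease}, Theorem~\ref{thm:alphaI}, Theorem~\ref{thm:red-alpha}, and Lemma~\ref{lem:invar} together imply $\A(\sigma) = \A'(\sigma') + O(1)$, so it suffices to prove the weight bound for the post-processed packing. Let $k = \needs(t(\smallestred))$ (with $k$ taken to be any value in the required range if there are no unmixed red bins, where the bound then holds trivially). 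I will show that for this $k$, both $\sum_i w_k(p_i)$ and $\sum_i v_k(p_i)$ are upper bounds on $\A'(\sigma') + O(1)$, so the minimum is a valid bound, and since this particular $k$ belongs to $\{1,\dots,K+1\}$, the maximum over $k$ in the statement dominates.

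For the $w_k$ inequality, decompose $\A'(\sigma') \le \sum_t B_t + \sum_t U_t + O(N)$, where $B_t$ counts bins containing a blue item of type $t$ and $U_t$ counts unmixed red bins of type $t$. Theorem~\ref{thm:red-alpha} and Lemma~\ref{lem:alphaI} give $B_t \le (1-\redfrac_t)\allitems/\bluefit_t + O(1)$. By the choice of $k$, $U_t = 0$ when $\needs(t) < k$, since otherwise the smallest red item in an unmixed bin would have class less than $k$. When $\needs(t) > k$, we have $U_t \le \redfrac_t\allitems/\redfit_t + O(1)$. When $\needs(t) = k$ and $t$ is medium, Theorem~\ref{thm:alphaI} implies that all red $\R$-items are in mixed bins, so only $\N$- and $\B$-marked red items populate unmixed red bins of type $t$, giving $U_t \le \redfrac_t(\allitems(\N) + \allitems(\B))/\redfit_t + O(1)$. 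Summing these counts and matching them with the piecewise definition of $w_k$ in Table~\ref{tab:weights} yields $\A'(\sigma') \le \sum_i w_k(p_i) + O(1)$. Type-$N$ items are handled by the classical \nf{} argument: any bin containing only tiny items has total size at least $1-\eps$, matching the weight $x/(1-\eps)$.

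For the $v_k$ inequality, the reasoning is dual, using Invariant~\ref{inv:unmixed}, which still holds after post-processing by Lemma~\ref{lem:invar}. The existence of an unmixed red bin of class $k$ rules out, for every type $i$ with $\leaves(i) \ge k$, any unmixed blue bin in group $(i,?)$ and any bin with a bonus item of type $i$. Consequently, blue items of such types $i$ must lie in mixed bins, and the cost of those bins is already absorbed by the red items they contain. It therefore suffices to charge only $\redfrac_t/\redfit_t$ to items of type $t$ with $\leaves(t) \ge k$, while charging the full weight only when $\leaves(t) < k$. Assembling these charges and matching them against the definition of $v_k$ in Table~\ref{tab:weights} yields $\A'(\sigma') \le \sum_i v_k(p_i) + O(1)$.

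The main technical obstacle is the careful case analysis for the ``boundary'' types (those with $\needs(t) = k$ in the $w_k$ argument and those with $\leaves(t) \ge k$ in the $v_k$ argument), where the marking of medium items determines whether items receive full or only partial weight. Invariant~\ref{inv:unmixed} and Theorem~\ref{thm:alphaI} are precisely the tools that make the accounting work: the former provides the ``dual'' restriction needed to justify the reduction from full to partial weight under $v_k$, while the latter guarantees that $\R$-marked red items live exclusively in mixed bins, which is what makes the corresponding reduction in $w_k$ legitimate for medium types at the boundary $\needs(t) = k$. Once these boundary cases are reconciled with the counts from Theorems~\ref{thm:alphaI} and~\ref{thm:red-alpha}, the theorem follows.
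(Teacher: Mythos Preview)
Your proposal is correct and follows essentially the same approach as the paper's proof. Both arguments post-process to $\sigma'$, fix $k=\needs(t(\smallestred))$, decompose the bin count as (bins with blue items) $+$ (unmixed red bins) for the $w_k$ bound and as (bins with red items) $+$ (unmixed blue bins) for the $v_k$ bound, and then invoke Theorem~\ref{thm:alphaI} (red $\R$-items lie in mixed bins) for the $w_k$ side and Invariant~\ref{inv:unmixed} for the $v_k$ side; the only cosmetic differences are that the paper indexes by class rather than by type and presents the two bounds jointly via a $\min$ on $\UnmixedRed$, whereas you establish each weight sum as an upper bound separately.
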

\begin{proof}
	Our goal is to upper bound $\A(\sigma)$ by the weights of the items $\p_1,\dots,\p_n$, which
	are the items in $\sigma'$. 
	We will show that the number of bins in the packing $P'$ is upper bounded by the first term in 
	(\ref{eq:ub-eh}), with the additive constant $O(1)$ corresponding to the bins removed in post-processing.
	We follow the line of the corresponding proof in Seiden~\cite{Seiden02}.
	
	Let $\Tiny$ be the total size of the items of type $\rTypes$ in $\sigma'$.
	Let $\UnmixedRed$ be the number of unmixed red bins in $P'$.
	Let $B_i$ and $R_i$ be the number of bins in $P'$
	containing blue items of class $i$ and type less than $N$, and red items of class $i$, respectively. 
	Note that this means that mixed bins are counted twice.
	
	If $\UnmixedRed = 0$, every red item is placed in a bin with one or more blue items, and $k=K+1$. In this case,
	the total number of bins in $P'$ is exactly the total number of bins containing blue items.
	Each bin containing items of type $\rTypes$ contains at least a total size of $1-\eps$
	due to Packing Property \ref{pp:twoblue}.
	The bins used to pack $\Tiny$ are pure blue and $\sum_{t(\p_i)=N} w_{K+1}(\p_i)=
	\sum_{t(\p_i)=N} v_{K+1}(\p_i) \ge  
	\Tiny/(1-\eps)$. 
	For each item $\p$ 	of type $t<\rTypes$, we have
	$w_{K+1}(\p) = \frac{1-\redfrac_t}{\bluefit_t} < v_{K+1}(\p)$. We see that $w_{K+1}$ counts all the bins with blue items, and
	$	\A(\sigma)\le \frac\Tiny{1-\eps} + \sum_{i=0}^K B_i\le \sum_{i=1}^\rItems w_{K+1}(\p_i)$
	(since $B_0$ does not include bins with items of type $N$).
	
	
	If $\UnmixedRed>0$, then $k=\needs(t(\smallestred))$, and there is an unmixed red bin of class $k$.
	By 
	Invariant \ref{inv:unmixed}, 
	%
	%
	all bins with a blue item of class $i \ge k$ must be \emph{mixed} bins.
	These are the bins which contain blue items of any type $j$ such that $\leaves(j) \ge k$;
	if $\smallestred$ is medium, this means exactly the large items with size at most $1-s(\smallestred)$.
	We conclude
	\begin{equation}
	\label{eq:1}
	\UnmixedRed \le \sum_{i=1}^{K}R_i - \sum_{i=k}^{K}B_i.
	\end{equation}
	Let $R_k(-\R)$ be the number of bins in $P'$ containing 
	red items of class $k$ that are not marked $\R$.
	If items of class $k$ are not medium, then $R_k(-\R)=R_k$.
	This is a well-defined condition by Lemma \ref{lem:allthesame}.
	Let $R_i^*$ be the number of \emph{unmixed} bins in $P'$ containing red items of class $i$.
	Since every red item with class less than $k$ (that is, red items of any type $j$ such that $\needs(j)<k$)
	is placed in a mixed bin by definition of $k$, we have
	\begin{equation}
	\label{eq:2}
	\UnmixedRed \le \sum_{i=k+1}^{K} R_i^* + R_k(-\R)
	\le \sum_{i=k+1}^{K} R_i + R_k(-\R).
	\end{equation}
	The first inequality holds because the red items marked $\R$ are in {mixed} bins by Packing Property \ref{pp:twoblue}.	
	(If $\smallestred$ is not medium, $R_k(-\R)=R_k$, so it also holds.)
	By combining (\ref{eq:1}) and (\ref{eq:2}), we have
$	\UnmixedRed \le \min\left\{\sum_{i=k+1}^{K} R_i + R_k(-\R), 
	\sum_{i=1}^K R_i - \sum_{i=k}^K B_i
	\right\}.
$
	So if $\UnmixedRed > 0$,	the total number of bins in $P'$ is at most
	\begin{align}
	\nonumber
	& \quad \frac1{1-\eps} \Tiny + \UnmixedRed + \sum_{i=0}^K B_i + O(1)\\
	\label{eq:min}
	\le & \quad B_0 + \frac1{1-\eps} \Tiny 
	+ 
	\min\left\{
	\sum_{i=k+1}^K R_i + R_k(-\R) + \sum_{i=1}^K B_i, 
	\sum_{i=1}^K R_i + \sum_{i=1}^{k-1} B_i
	\right\} 
	+ O(1).
	\end{align}
	Let $J$ be the set of types whose blue items are packed in pure blue bins, including type 1 and type $\rTypes$. 
	For each item $\p$ of type $t\not=\rTypes$, $t\in J$, we have $\leaves(t)=0<k$, 
	so $v_{k,s(\smallestred)}(\p) = \frac{1-\redfrac_t}{\bluefit_t}$. Furthermore, for all $t\neq N$ we have $w_k(\p) \ge \frac{1-\redfrac_t}{\bluefit_t}$.
	We conclude
	$ \sum_{j\in J} \sum_{t(\p_i)=j} w_k(\p_i) \ge
	\sum_{j\in J} \sum_{t(\p_i)=j} v_{k,s(\smallestred)}(\p_i) \ge B_0+\frac{\Tiny}{1-\eps}\ .
	$
	
	In the first term of the minimum in (\ref{eq:min}), we count all bins with blue items
	except the pure blue bins, all bins with red items of classes
	above $k$, and the bins with red items of class $k$ that are not marked $\R$.
	(If red items of class $k$ are small, this means all red items of this class.)
	This term is therefore upper bounded by 
	$\sum_{j\notin J} \sum_{t(\p_i)=j} w_k(\p_i).
	$
	In the second term of the minimum in (\ref{eq:min}), we count all bins with red items, as well
	as bins with blue items of class at least 1 and at most $k-1$.
	The second term is therefore  upper bounded by  
	$\sum_{j\notin J} \sum_{t(\p_i)=j} v_{k,s(\smallestred)}(\p_i).
	$
	As noted above Theorem \ref{thm:ub-eh}, this is at most $\sum_{j\notin J} \sum_{t(\p_i)=j} v_{k}(\p_i).
	$
	\qed\end{proof}

\paragraph{Note} In his proof, Seiden~\cite{Seiden02} defines an item $e$ as the smallest red item in an indeterminate red group bin, and proceeds to argue using the class of $e$. This only works because there is one red item in each bin, so there could not be a larger red item of a smaller class that is in an indeterminate group bin. The proof structure above (defining first $k$ and then $\smallestred$) allows {\SuperH} algorithms to pack multiple red items in one bin as well.

Seiden expresses the upper bound as a maximum over $k$, even though for 
a fixed input sequence, the value of $k$ is of course fixed. 
While the resulting expression is correct, we prefer the easier and more direct formulation in Theorem \ref{thm:ub-eh} above.
%
%
%

\section{The offline solution} \label{sec:offline-solution}
Having derived an upper bound for the total cost of an \EHarm{} algorithm
in Theorem \ref{thm:ub-eh}, in order to calculate the asymptotic competitive
ratio (\ref{eq:apr}), we now need to lower bound the optimal cost of a given input
after post-processing. This will again depend on what the value of $k$ is.
There are two main cases if $k\in\{1,\dots,K\}$: {$\smallestred$ is medium and $\smallestred$ is small}.
The case $k=K+1$ is much easier, because $w_{K+1}(\p)\le v_{K+1}(\p)$ for each item $\p$,
so $\sum_{i=1}^n w_{K+1}(\p)$
upper bounds the cost of $\A$ by Theorem \ref{thm:ub-eh}, 
and this sum does not depend on any marks of items.
We can therefore use a standard knapsack search
as in Seiden~\cite{Seiden02} (for this case) and other papers.

For $k\in\{1,\dots,K\}$, we will be interested in the weights of items for a fixed value of $k$.
It can be seen that in the range $(1/2,1]$, the function $v_{k}(\p)$ changes at most once
(viewed as a function of the size of $\p$), namely at the threshold $1-t_{t(\smallestred)}$,
where $\leaves(k)$ drops below $k$ if $\smallestred$ is medium.
On the other hand $w_k(\p)=1$ in the entire range $(1/2,1]$.
For a fixed value of $k<K+1$, we therefore reduce the number of types again as follows. 
Recall that $t_3=1/2$, and $\smallestred$ is determined by $k$.

\paragraph{Case 1: $\smallestred$ is medium}
We set $t_2=1-t_{t(\smallestred)}$, $t_1=2/3$ and $t_0=1$. We set $\leaves(2)=k$, 
$\leaves(1)<k$ such that $\redspace_{\leaves(1)}=1/3 < t_{t(\smallestred)}$, and $\leaves(0)=0$.

\paragraph{Case 2: $\smallestred$ is small}
We set $t_2=2/3$ and $t_1=1$ as in {\EHarm} itself (Property \ref{prop:type12N}).
We have $\redspace_{\leaves(2)}=1/3$, and $\leaves(1)=0$.

\vspace{10pt}
After these changes, Theorem \ref{thm:ub-eh} remains valid for any fixed $k$, as $w_k$ and $v_k$ remain
unchanged (given $k$).
This holds even though if $\smallestred$ is medium, the types do not match the types used by {\EHarm};
the important property is that they match the behavior of {\EHarm} for any fixed value of $k<K+1$.

We now define patterns for the two main cases.
Intuitively, a pattern describes the contents of a bin in the optimal offline solution. 
If $\smallestred$ is medium, a \emph{pattern of class k} is an integer tuple $q=(q_0,q_1,\dots,q_{\rTypes-1},(q_{t(\smallestred)}^\N,q_{t(\smallestred)}^\B,q_{t(\smallestred)}^\R))$ where $q_i\in \mathbb{N}\cup\{0\}, q_{t(\smallestred)}^\M \in \mathbb{N}\cup\{0\}$ for $\M \in \{\N,\B,\R\}$, $q_{t(\smallestred)}^\N+q_{t(\smallestred)}^\B+q_{t(\smallestred)}^\R = q_{t(\smallestred)}$
and
\begin{equation}
	\label{eq:fits}
	\sum_{i=0}^{\rTypes-1} q_it_{i+1}<1.
\end{equation}
The values $q_i$ describe how many items of type $i$ are present in the bin.
The value $q_{t(\smallestred)}^\M$ counts the number of items of type $t(\smallestred)$ and mark $\M$. 
It can be seen that any feasible packing of a bin can be described by a pattern: the only quantity that is not 
fixed by a pattern is the total size of the items of type $N$, which we will call sand.
However, by (\ref{eq:fits}), there can be at most $1-\sum_{i=0}^{\rTypes-1} q_it_{i+1}$ of sand in a bin
packed according to pattern $q$. Conversely, for each pattern, a set of items matching the pattern
that fit into a bin can be found by choosing the size of each item close enough (from above) to the
lower bound $t_{i+1}$ for its type; then (\ref{eq:fits}) guarantees the items will fit.

If $\smallestred$ is small, we define a pattern of class $k$ as an integer tuple
$q=(q_1,\dots,q_{\rTypes-1})$ where $q_i \in \mathbb{N}\cup\{0\}$ and (\ref{eq:fits}) holds
using $q_0=0$
(note that the values $t_1$ and $t_2$ depend on whether $\smallestred$ is medium or small,
but the definition of $t(\smallestred)$ is consistent across these two cases).

There are only finitely many patterns for each value of $k$. 
Denote this set by $\Q_{k}$ for $k=1,\dots,K$.
If $\smallestred$ is small or $k=K+1$, $\Q_{k}$ is a fixed set, denoted by $\Q$.

For a given weight function $w$ of class $k$, 
we define $w(q)$ for some pattern $q$ as the sum of the weights of the non-sand items in it plus
$w(1-\sum_{i=0}^{\rTypes-1} q_it_{i+1},\rTypes,\emptyset)$. As noted, $1-\sum_{i=0}^{\rTypes-1} q_it_{i+1}$
is an upper bound for the amount of sand in a bin packed according to pattern $q$; this 
value is not necessarily in the range $(0,t_\rTypes]$. If $\smallestred$ is medium, $q_0=0$.
Pattern $q$ specifies all the information we need to
calculate $w(q)$, as $w$ does not depend on the precise size of non-sand items, and for class $k$
we know exactly how many items there are (if any) for each mark.

We can describe the solution of an offline algorithm for a given post-processed input $\sigma'$ by a distribution $\chi$ over
the patterns, where $\chi(q)$ indicates which fraction of the bins 
in the optimal solution are packed using pattern $q$. 
Theorem \ref{thm:red-alpha} shows that $\opt(\sigma')\le\opt(\sigma)$, where $\sigma$ refers to the original input and $\sigma'$ refers to the input after post-processing.

To show that \EHarm{} has competitive ratio at most $c$ for an input sequence
$\sigma$ with
{a particular value } $k<K+1$, by Theorem \ref{thm:ub-eh} it is sufficient to show that
\begin{align}
\frac{\min\left\{ 
	\sum_{i=1}^\rItems w_k(\p_i),\sum_{i=1}^\rItems v_k(\p_i)
	\right\}}{\opt(\sigma')} &= 
\min\left\{ 
\frac{\sum_{i=1}^\rItems w_k(\p_i)}{\opt(\sigma')},\frac{\sum_{i=1}^\rItems v_k(\p_i)}{\opt(\sigma')}
\right\}\\
& \le \min\left\{ 
{\sum_{q\in \Q_k} \chi(q) w_k(q)},{\sum_{q\in \Q_k} \chi(q) v_k(q)}
\right\}\le c
\label{eq:compratio}
\end{align}
for all such inputs $\sigma$, using that
$\sum_{i=1}^\rItems w(\p_i) \le {\opt(\sigma')}\sum_{q\in \Q_k} \chi(q) w(q)$ for $w\in\{w_k,v_k\}$,
as $w(q)$ uses an upper bound for the amount of sand but is otherwise exactly the sum of the weights
of the items in it.

As can be seen from this bound, the question now becomes: what is the distribution $\chi$ (the mix of patterns) that maximizes the minimum in (\ref{eq:compratio})? We begin by deriving some crucial constraints on $\chi$ for the important case that $\smallestred$ is medium.
This is the point where we start using the marks.
%
The notation $q_i(q)$ refers to entry $q_i$ in pattern $q$.
We use $q_{t(\smallestred)}^{-\B}(q)$ as shorthand for $q_{t(\smallestred)}^{\R}(q)+q_{t(\smallestred)}^{\N}(q)$.

We define three important patterns $q^1,q^2,q^3$.
For $i=1,2,3$, let $$q^i = (0,1,0,\ldots,0,1,0,\ldots,0,(e_i)),$$ where the second 1 is at position $t(\smallestred)$, and
$e_i$ is the $i$-th three-dimensional unit vector. 
These are the three possible patterns with an item of type $t(\smallestred)$ and an item larger than $1-s(\smallestred)$.
By Property \ref{prop:q1unique}, 
no non-sand item can be added to any of these patterns while maintaining $\sum_{i=0}^{\rTypes-1} q_it_{i+1}<1$.

\begin{lemma}
	\label{lem:q1fv}
	If {$\smallestred$ is medium},
	then 
	\[\chi(q^1) \le \frac{1-\redfrac_{t(\smallestred)}}{1+\redfrac_{t(\smallestred)}}\sum_{q\neq q^1} \chi(q) q_{t(\smallestred)}^{-\B}(q)\,.\]
\end{lemma}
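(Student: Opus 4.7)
The plan is to prove the two claims in sequence, using a size argument together with Lemma~\ref{lem:smallestred1}. For the first claim ($n_\N(q) \in \{0,1,2\}$), I would simply note that $t(\smallestred)$ is a \emph{medium} type, so every $\N$-item counted by $n_\N(q)$ has size greater than $1/3$; since a bin has capacity $1$, at most two such items fit.

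For the inequality, the crucial observation is that the $\N$-item appearing in pattern $q^1$ must have actual size strictly less than $s(\smallestred)$. Indeed, $q^1$ also contains an item of size strictly greater than $1-s(\smallestred)$, so for both items to coexist in a unit-capacity bin, the $\N$-item has size $< s(\smallestred)$. Since every \emph{red} $\N$-item of type $t(\smallestred)$ has size at least $s(\smallestred)$ by definition of $\smallestred$, the $\N$-item in any $q^1$-bin must be \emph{blue} and of size $< s(\smallestred)$, that is, a ``small'' $\N$-item. By Lemma~\ref{lem:smallestred1}, at least $(\redfrac_{t(\smallestred)} + (1-\redfrac_{t(\smallestred)})/2)\allitems(\N) - 2$ items of $\N$ have size $\ge s(\smallestred)$, so the total number of small $\N$-items is at most $(1-\redfrac_{t(\smallestred)})\allitems(\N)/2 + 2$. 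Since each $q^1$-bin consumes exactly one such small item, this yields
\[\chi(q^1)\,\opt(\sigma') \le \frac{1-\redfrac_{t(\smallestred)}}{2}\,\allitems(\N) + 2.\]

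To close the argument, I would count all $\N$-items of type $t(\smallestred)$ across the optimal packing:
\[\allitems(\N) = \opt(\sigma')\left(\chi(q^1) + \sum_{q\neq q^1}\chi(q)\,n_\N(q)\right).\]
Substituting this identity into the previous bound and isolating $\chi(q^1)$ gives
\[\chi(q^1)\cdot\frac{1+\redfrac_{t(\smallestred)}}{2} \le \frac{1-\redfrac_{t(\smallestred)}}{2}\sum_{q\neq q^1}\chi(q)\,n_\N(q) + O(1/\opt(\sigma')),\]
which yields the stated inequality in the asymptotic limit. The main subtlety is the size argument in the first step (ensuring the pattern-level description ``larger than $1-s(\smallestred)$'' really forces the companion $\N$-item in $q^1$ to be strictly smaller than $s(\smallestred)$, and hence to be blue); once that is pinned down, the additive constants coming from Lemma~\ref{lem:smallestred1} and from floor operations are absorbed into the additive slack already built into the asymptotic framework.
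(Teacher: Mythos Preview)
Your proposal is correct and follows essentially the same approach as the paper's proof: observe that the $\N$-item in pattern $q^1$ is strictly smaller than $s(\smallestred)$, invoke Lemma~\ref{lem:smallestred1} to bound the number of such ``small'' $\N$-items by $\frac{1-\redfrac_{t(\smallestred)}}{2}\allitems(\N)$ (up to an additive constant), express $\allitems(\N)$ via the distribution $\chi$, and solve for $\chi(q^1)$. You spell out a few more details than the paper does (the size bound giving $n_\N(q)\le 2$, the explicit handling of the additive $+2$, and the observation that the $\N$-item in $q^1$ must be blue), but the argument is the same.
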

\begin{proof}
	We ignore additive constants in this proof, as we will divide by 
	$\opt(\sigma')$ at the end to achieve our result.	
	The pattern $q^1$ contains an $\N$-item that is strictly smaller than
	$\smallestred$. 
	We apply Lemma \ref{lem:smallestred1} for $i=t(\smallestred)$ (ignoring the additive constant) to get
	\begin{align*}
	\chi(q^1)\opt(\sigma') &\le \frac{1-\redfrac_{t(\smallestred)}}{2} (n^{t(\smallestred)}(\R) + n^{t(\smallestred)}(\N))	\\
	&
	\le \frac{1-\redfrac_{t(\smallestred)}}{2} \left(\chi(q^1) + \sum_{q\neq q^1} \chi(q)
	q_{t(\smallestred)}^{-\B}(q)
	\right)\opt(\sigma'),
	\end{align*}
	and the bound in the lemma follows.
	\qed\end{proof}


\begin{lemma}
	\label{lem:itsblue}
	In $q^2$, the $\B$-item $\p$ of type $t(\smallestred)$ is blue.
\end{lemma}
\begin{proof}
	{\EHarm} did not pack $\p$ alone in a bin as a red item, since it is smaller than $\smallestred$.
	But by Packing Property \ref{pp:allesOK}, $\p$ also was not packed in a mixed bin as a red $\B$-item.
\end{proof}

\begin{lemma}
	\label{lem:q2fv}
	If $\smallestred$ is medium, then
	\[ \frac1
	{2}\chi(q^2) \le 
	\sum_{j:0<\needs(j)\le \leaves({t(\smallestred)})} \sum_{q} \frac{\redfrac_j}{\redfit_j}\chi(q)q_j(q)\ .\] 
\end{lemma}
\begin{proof}
	Again, we ignore additive constants.
	There are $\chi(q^2)\opt(\sigma')$ bins packed with pattern $q^2$, meaning that
 	$\sigma'$ contains at least $\chi(q^2)\opt(\sigma')$ blue $\B$-items of type $t(\smallestred)$
	by Lemma \ref{lem:itsblue}. So in the packing $P'$, there exist
	at least $\frac1
	{2}\chi(q^2)\opt(\sigma')$ bins with two blue $\B$-items of type $t(\smallestred)$
	and red items. The red items are red-compatible with those $\B$-items.
	That is, each such red item is of a type $j$ such that $0<\needs(j)\le \leaves(t(\smallestred)).$
	
	The number of items of type $j$ in $\sigma'$ is given by
	$\sum_q \chi(q)q_j(q)\cdot\opt(\sigma')$.
	By Theorem \ref{thm:red-alpha}, the number of red items of type $j$ is $\redfrac_j\sum_q \chi(q)q_j(q)\cdot\opt(\sigma')$.
	We place $\redfit_j$ red items together in each bin.
	This means that the number of bins in $P$ with red items of type $j$
	is $\frac{\redfrac_j}{\redfit_j}\sum_q \chi(q)q_j(q)\cdot\opt(\sigma')$.
	Summing over all types $j$ with $0<\needs(j)\le \leaves(t(\smallestred))$, we find that
	\begin{align*}
	\frac1
	{2}\chi(q^2)\opt(\sigma')
	&\le 
	(\mbox{number of bins in $P'$ with two blue $\B$-items of type } t(\smallestred) 
	\mbox{and red items})\\
	&\le
	(\mbox{number of bins in $P'$ with red items that fit with items} 
	\mbox{ of type } t(\smallestred))\\
	&= \left(\sum_{j:0<\needs(j)\le \leaves(t(\smallestred))} 
	\sum_q \frac{\redfrac_j}{\redfit_j}\chi(q)q_j(q)\right)\cdot\opt(\sigma').
	\end{align*}
	\qed\end{proof}

\subsection{Linear program}

Maximizing the minimum in (\ref{eq:compratio}) is the same as maximizing the first
term under the condition that it is not larger than the second term---except that
this condition might not be satisfiable, in which case we need to maximize the
second term. For each value of $k \in \{1,\ldots,K\}$, we will therefore consider two linear programs, and furthermore these linear programs will differ depending on whether $\smallestred$ is medium or small, so that in total we get four different LPs which we will call
$P_w^{k,\text{med}}$,$P_w^{k,\text{sml}}$, $P_v^{k,\text{med}}$ and $P_v^{k,\text{sml}}$ (we will use the notation $P_w^{k}$ ($P_v^k$) whenever we want to refer to both $P_w^{k,\text{med}}$ and $P_w^{k,\text{sml}}$ ($P_v^{k,\text{med}}$ and $P_v^{k,\text{sml}}$)). Let $\Q_k=\{q^1,\dots,q^{|\Q_k|}\}$
and let $\chi_i = \chi(q^i), w_{ik}=w_k(q^i), v_{ik}=v_k(q^i), n_{ij} = q_j(q^i), m_i = q_{t(\smallestred)}^{-\B}(q^i)$. 
If $\smallestred$ is medium, $P^{k,\text{med}}_w$ is the following linear program.

\LPblocktag{$P^{k,\text{med}}_w$}{\label{linprog:lp1}}%
\begin{minipage}{\linewidth-2cm}
	\begin{flalign}
	& \max & \sum_{i=1}^{|\Q_k|} \chi_i w_{ik} \\
	& \text{s.t.} & \chi_1 - \frac{1-\redfrac_{t(\smallestred)}}{1+\redfrac_{t(\smallestred)}}\sum_{i=3}^{|\Q_k|} \chi_im_i   & \leq  0                 &&  && \label{c1:q1fv}\\
	&             & \frac1
	{2}\chi_2 - 
	\sum_{j:0<\needs(j)\le \leaves(t(\smallestred))} \sum_{i=3}^{|\Q_k|}  \frac{\redfrac_j}{\redfit_j}\chi_in_{ij} & \leq 0 &&  && \label{c1:q2fv}\\
	& & \sum_{i=3}^{|\Q_k|} \chi_i\left(w_{ik}-v_{ik}\right) & \le 0 && && \label{c1:wvfv}\\
	& & \sum_{i=1}^{|\Q_k|} \chi_i & \le 1 && && \label{c1:chifv}\\
	& & \chi(q) & \ge 0 && \forall q\in\Q_k && \label{c1:chi0fv}
	\end{flalign}~
\end{minipage}
$P^{k,\text{med}}_w$ has a very large number of variables
but only four constraints (apart from the nonnegativity constraints).
Constraint (\ref{c1:q1fv}) is based on Lemma \ref{lem:q1fv}, where we have used that
$q^2$ does not contain any item marked $\N$ or $\R$, implying $m_2=0$. Constraint
(\ref{c1:q2fv}) is based on Lemma \ref{lem:q2fv}, using that $q^1$ and $q^2$ do not
contain non-sand items of size less than $1/3$, so $n_{1j}=0$ and $n_{2j}=0$ for all $j$ for which
$\needs(j)\le \leaves(t(\smallestred))$.\footnote{We also have $n_{3j}=0$, but we keep the term for $i=3$ in (\ref{c1:q2fv}) to make the dual easier to write down.} Constraint (\ref{c1:wvfv}) says
simply that the objective function must be at most $\sum_{i=1}^{|\Q_k|} \chi_i v_{ik}$ (using that $w_{ik}=v_{ik}$ for $i=1,2$, which we will prove in Lemma \ref{lem:10}):
if this does not hold, we should be solving the linear program $P_v^{k,\text{med}}$, which has
objective function $\sum_{i=1}^{|\Q_k|} \chi_i v_{ik}$, instead. The final constraints (\ref{c1:chifv}) 
and (\ref{c1:chi0fv}) say that $\chi$ is a distribution.

\begin{lemma}
	\label{lem:10}
	$v_{1k}=w_{1k} = w_{2k}=v_{2k}$.
\end{lemma}
\begin{proof}
	Recall that $q^1$ contains one $\N$-item of type $t(\smallestred)$, i.e. the same type as $\smallestred$, and one item larger than $1-s(\smallestred)$. Call the $\N$-item $\smallestred'$ and the large one $\mathfrak{L}$; note that $t(\mathfrak{L})=2$. We have that $w_k(q^1) = w_k(\smallestred') + w_k(\mathfrak{L}) + S$, where $S$ is an upper bound for the weight of the sand, and $v_k(q^1) = v_k(\smallestred') + v_k(\mathfrak{L}) + S$ (the maximum possible amount of sand and hence also its weight is equal in the two cases). As  $\redfrac_2 = 0$ ($\mathfrak{L}$ is larger than 1/2 and such items are never red), and $\mathfrak{L}$ is too large to be combined with $\smallestred$, 
	we have $w_k(\mathfrak{L}) = v_k(\mathfrak{L}) = 1/\bluefit_2=1$. 
	
	For $w_k(\smallestred')$, consider that $\smallestred'$ and $\smallestred$ have the same type, and as the mark of $\smallestred'$ is $\N$, we get $w_k(\smallestred') = \frac{1-\redfrac_{t(\smallestred)}}{\bluefit_{t(\smallestred)}} + \frac{\redfrac_{t(\smallestred)}}{\redfit_{t(\smallestred)}}$.
	For type $t(\smallestred)$, we have that $\leaves(t(\smallestred))<\needs(t(\smallestred))$ (Property \ref{prop:leaves-needs}). Therefore, $v_k(\smallestred') = \frac{1-\redfrac_{t(\smallestred)}}{\bluefit_{t(\smallestred)}} + \frac{\redfrac_{t(\smallestred)}}{\redfit_{t(\smallestred)}} = w_k(\smallestred')$.
	This shows that $v_{1k}=w_{1k}$.
	
	The pattern $q^2$ contains one $\B$-item of type $t(\smallestred)$ (denoted by $\smallestred''$) and one item larger than $1-\smallestred$ (again denoted by $\mathfrak{L}$). 
	We have $w_k(\smallestred'') = w_k(\smallestred')$ since the weight $w_k$ is the same for $\N$- and $\B$-items of the same class. As above, we find  $w_k(\mathfrak{L})
	=v_k(\mathfrak{L})=1$ and 
	$v_k(\smallestred'') = \frac{1-\redfrac_{t(\smallestred)}}{\bluefit_{t(\smallestred)}} + \frac{\redfrac_{t(\smallestred)}}{\redfit_{t(\smallestred)}} = w_k(\smallestred'')$.
	This shows $w_{2k}=v_{2k}$ and $w_{1k}=w_{2k}$.
	\qed\end{proof}

For the case {when $\smallestred$ is small}, we do not have conditions
(\ref{c1:q1fv}) and (\ref{c1:q2fv}), and
the linear program $P_w^{k,\text{sml}}$ looks as follows.
Here we denote the set of patterns simply by $\Q$ since it is the same for all values of $k$
for which $\redspace_k\le1/3$. In this setting, $q^1,q^2,q^3$ do not have a special meaning.

\LPblocktag{$P_w^{k,\text{sml}}$}{\label{linprog:lp2}}%
\begin{minipage}{\linewidth-2cm}
	\begin{flalign}
	& \max & \textstyle\sum_{i=1}^{|\Q|} \chi_i w_{ik} 
	\label{sh1:max}
	\\
	& \text{s.t.} & \textstyle\sum_{i=1}^{|\Q|} \chi_i\left(w_{ik}-v_{ik}\right) & \le  0
	&&  && \label{sh1:q1fv}
	\\
	& & \textstyle\sum_{i=1}^{|\Q|} \chi_i & \leq 1 
	&&  && \label{sh1:chifv}
	\\
	& & \chi(q) & \ge 0 
	&& \forall q\in\Q && \label{sh1:chi0fv}
	\end{flalign}~
\end{minipage}
%
\vspace{-20pt}
\paragraph{Intermezzo}
It is useful to consider the value of $w_{1k}$ (etc.).
We have not discussed the values of the parameters yet. However, as an example, 
for the algorithm {\Hpp}, two of the types are $(341/512,1]$ and $(1/3,171/512]$ (types 1 and 18).
Let us consider the case where at the end of the input, an item of type
18 is alone in a bin, and no smaller items are alone in bins.
For this case, for {\Hpp}, the two weighting functions for the pattern which contains types 1 and 18 both evaluate to
\[ 1 + \frac{1-0.176247}{2} + \frac{0.176247}{1} + \frac{1}{1-\frac{1}{50}}\cdot\frac{1}{1536} \approx 1.58879.\]
In other words, a distribution $\chi$ consisting only of this one pattern immediately gives a lower bound of 1.58879 on the competitive ratio of {\Hpp}. 

Our improved packing of the medium items and our marking of them ensures that this distribution, where the optimal solution uses critical bins exclusively, can no longer be used, since it is not a feasible solution to $P_w^k$. This is the key to our improvement over \Hpp.

\subsection{Dual program}
\label{sec:dual}

Our general idea is as follows: We consider the duals of the linear programs given above. These dual LPs have variables $y_1,\ldots,y_4$ or $y_3,y_4$, respectively. Any feasible solution for the dual (which is a minimization problem) is an upper bound on the competitive ratio of our algorithm by duality and by (\ref{eq:compratio}). We are interested in feasible dual solutions with objective value $c$, where $c$ is our target competitive ratio. 

\paragraph{Case 1: $\smallestred$  is small}
The dual of \ref{linprog:lp2} is the following. 

\LPblocktag{$D_w^{k,\text{sml}}$}{\label{linprog:dp2}}%
\begin{minipage}{\linewidth-2cm}
	\begin{flalign}
	& \nonumber   \min & y_4
	\\
	& \text{s.t.} & (w_{ik}-v_{ik})y_3 + y_4   & \geq  w_{ik}                 && i=1,\dots,|\Q| && \label{sh:d3fv}\\
	& \nonumber            & y_i & \geq 0 && i=3,4 
	\end{flalign}~
\end{minipage}
%
If the constraint (\ref{sh:d3fv}) does not hold for pattern $q_i$ and a given
dual solution $y^*$, we have
\begin{equation}
\label{eq:heavy}
(1-y_3^*)w_{ik} + y_3^* v_{ik} > y_4^*
\end{equation}
We need to determine if there is a pattern such that (\ref{eq:heavy}) holds.
For $y_3^* \in [0,1]$, the left hand side of (\ref{eq:heavy}) represents a weighted average of the weights $w_{ik}$ and $v_{ik}$.
We add the condition $y_3\le1$ to $D_w^{k,\text{sml}}$.
A feasible solution with objective value $c$ and $y_3\le1$ exists for $D_w^{k,\text{sml}}$ if and only if
a feasible solution with objective value $c$ and $y_3\le1$ exists for $D_v^{k,\text{sml}}$, as (\ref{eq:heavy}) is
now symmetric in $w$ and $v$.
This means that feasibility of $D^{k,\text{sml}}_w$ and $D^{k,\text{med}}_v$ with  $y_3\le1$ can be checked at the same time.
Again, note that it is sufficient for our purposes to find a feasible solution.

We define $\omega_k(\p)= (1-y_3^*)w_{k}(\p) + y_3^* v_{k}(\p)$ for each item $\p$. 
Since $\smallestred$ is small, there are no marked items of type $t(\smallestred)$, so $\omega_k(\p)$ depends only on the type
and size of $\p$.
The problem of determining $W=\max_{q\in\Q}\omega_k(q)$ for a given value of $y_3^*$ is a simple knapsack problem, which is straightforward to solve using dynamic programming. 

All that remains to be done is to determine a value for $y_3^*$ for given $k$ such that $W\le c$.
{In order to do this, we use a binary search in the interval $[0,1]$.}
We start by setting $y_3^*=1/2$ and compute $W$. If $W\le y_4^*$, $D_w^{k,\text{sml}}$ and $D_v^{k,\text{sml}}$ have objective value at most $y_4^*$ and we are done. Else, the dynamic program returns a pattern $q$ such that $\omega_k(q)>y_4^*$. For this pattern $q$, we compare its weights according to $w$ and $v$.
If $w_{ik} > v_{ik}$, we increase $y_3^*$, else we decrease it (halving
the size of the interval we are considering).
If after 20 iterations we still have no feasible solution, we return
infeasible. This may be incorrect (it depends on how long we search), 
but our claimed competitive ratio depends only on the correctness of 
\emph{feasible} solutions.

Summarizing the above discussion, if {$\smallestred$  is small},
proving that an \EHarm{} algorithm is $c$-competitive can be done by 
running the binary search for $k=\needs({t^*})$ using $y_4^*=c$.
If $(y_3^*, y_4^*)$ is a feasible solution for $D_w^{k,\text{sml}}$, then $(1-y_3^*,y_4^*)$ is a feasible solution for $D_v^{k,\text{sml}}$.

\paragraph{Case 2: $\smallestred$  is medium}

For the more interesting case {when $\smallestred$  is medium}, the dual $D_w^{k,\text{med}}$ of the program $P_w^{k,\text{med}}$ is the following.

\LPblocktag{$D_w^{k,\text{med}}$}{\label{linprog:dp1}}%
\begin{minipage}{\linewidth-2cm}
	\begin{flalign}
	& \nonumber \min & y_4 \\
	& \text{s.t.} & y_1 + y_4
	& \ge w_{1k}
	&&  && \label{c:d1fv}
	\\
	& & 
	\frac1
	2 y_2 + y_4 
	& \ge w_{2k} 
	&& && \label{c:d2fv}
	\\
	& & \omit\rlap{$\displaystyle -\frac{1-\redfrac_{t(\smallestred)}}{1+\redfrac_{t(\smallestred)}} m_i y_1 
		- y_2 \sum_{j:0<\needs(j)\le \leaves(t(\smallestred))} \frac{\redfrac_j}{\redfit_j} n_{ij}$}
	\notag
	\\
	& & + (w_{ik}-v_{ik})y_3 + y_4
	& \ge w_{ik}
	&& i=3,\dots,|\Q_k| && \label{c:d3fv}
	\\
	& & y_i
	& \ge 0
	&& i=1,2,3,4 && \label{c:d4fv}
	\end{flalign}~
\end{minipage}

{Again we restrict ourselves to solutions with $y_3^* \in [0,1]$.}
%
If the value $y_4^* = c \ge w_{1k}=w_{2k}$, the conditions (\ref{c:d1fv}) and (\ref{c:d2fv}) are automatically satisfied by (\ref{c:d4fv}). In this case we can set $y_1^*=0$ and $y_2^*=0$. In effect, this reduces $D_w^{k,\text{med}}$ to \ref{linprog:dp2}, for which we already know how to {find a feasible value for $y_3^*$}. We therefore ignore the entire marking done by the algorithm and set the weight for each item to be the weight for the case that its mark is not $\R$. Then weights again do not depend on marks and we apply the method from Case 1. 

Let us now consider the case 
$y_4^* = c < w_{1k}.
$ 
For given $y_4^*$ we need to determine if $D^{k,\text{med}}_w$ and $D^{k,\text{med}}_v$ are feasible; this requires finding suitable values for $y_1, y_2$ and $y_3$. If a solution vector $y^*$ is feasible for $D^{k,\text{med}}_w$ (or $D^{k,\text{med}}_v$), 
$y_4^*< w_{1k}=v_{1k}=w_{2k}=v_{2k}$, and constraint (\ref{c:d1fv}) or (\ref{c:d2fv}) is not tight, then we can decrease $y_1^*$ and/or $y_2^*$ and still have a feasible solution.
We therefore restrict our search to
solutions for which (\ref{c:d1fv}) and (\ref{c:d2fv}) are tight, and $y_4^*< w_{1k}$.
Then 
\begin{align}
\label{eq:y1}
y_1^* &= w_{1k} - y_4^* &&> 0\\
\label{eq:y2}
y_2^* &= 2
(w_{1k}-y_4^*) &&>   0.
\end{align}

This means that given $y_4^* < w_{1k}$, we know the values of $y_1^*$ and $y_2^*$. 
We can therefore prove $y^*_4$ is a feasible objective value for $D^{k,\text{med}}_w$ {by giving $y_3^*$-values that make the linear program feasible.} 
If constraint (\ref{c:d3fv}) does not hold for pattern $q_i$ ($i\ge3$) and a given dual solution $y^*$, we have the
following by some simple rewriting:
\begin{equation}
\label{eq:y3}
(1-y_3^*)w_{ik} + y_3^* v_{ik} + \frac{1-\redfrac_{t(\smallestred)}}{1+\redfrac_{t(\smallestred)}} m_i y_1^* + y_2^* \sum_{j:0<\needs(j)\le \leaves({t(\smallestred)})} \frac{\redfrac_j }{\redfit_j} n_{ij} > y_4^*
\end{equation}
If this holds for some pattern $q$ that contains an $\R$-item, then it obviously also holds 
if we replace that $\R$-item by an $\N$-item of the same type.
This gives a pattern with the same values $m_i = q^{t(\smallestred)}_{-\B}(q^i)$ and $n_{ij}=q_j(q^i)$ but a higher value for $w_{ik}$.
It is therefore sufficient to check the patterns with $\N$-items.
The only exception to this is if replacing the $\R$-item by an $\N$-item would give pattern $q^1$, which does have weight
larger than $y_4^*$ and therefore violates (\ref{eq:y3}) (but constraint (\ref{c:d3fv}) does not involve pattern $q^1$).
We therefore check pattern $q^3$ separately.

We have $w_{3k} = 1 + \frac{1-\redfrac_{t(\smallestred)}}2 + \frac{1}{1-\eps} \eps$, 
$v_{3k} = 1 + \frac{1+\redfrac_{t(\smallestred)}}2 + \frac{1}{1-\eps} \eps$,
$m_3=1$, $n_{3j}=0$ for all $j$.
Hence the left hand side of (\ref{eq:y3}) is at most $\frac32 + \frac{1}{1-\eps} \eps + y_1^* \le 1.516$ for $\eps<0.01,
y_1^*\le0.005$, and $y_3^*\le1/2$
using Property \ref{prop:q1unique}.
All our solutions will satisfy these constraints and thus we can ignore $\R$-items in the knapsack problem.
(For completeness, we check pattern $q^3$ separately in our program.)

We define a new weighting function $\omega(\p)$ for the items as given in Table \ref{tab:weights_knapsack},
which depends only on types and sizes (and not on marks).

\begin{table}[h]
	\caption{Weighting function $\omega(\p)$.}
	\label{tab:weights_knapsack}
	\centering
	\begin{tabular}{l|l}
		$t(\p)$ & $\omega(\p)$\\
		\hline
		$t(\smallestred)$ &
		$\displaystyle
		(1-y_3^*)\left(\frac{1-\redfrac_{t(\smallestred)}}{\bluefit_{t(\smallestred)}} + \frac{\redfrac_{t(\smallestred)}}{\redfit_{t(\smallestred)}}\right) + y_3^* v_{k}(\p) 
		+ \frac{1-\redfrac_{t(\smallestred)}}{1+\redfrac_{t(\smallestred)}} y_1^*$\\
		$j, 0<\needs(j)\le \leaves(t(\smallestred))$ & 
		$\displaystyle
		(1-y_3^*)w_{k}(\p) + y_3^* v_{k}(\p) + \frac{\redfrac_j}{\redfit_j} y_2^*$
		\\
		else & $(1-y_3^*)w_{k}(\p) + y_3^* v_{k}(\p)$
	\end{tabular}
\end{table}

%
%
%

In order to prove that an \EHarm{} algorithm is $c$-competitive if {$\smallestred$  is medium} and $c<w_{1k}$,
it is sufficient to verify that there exists a value
$y_3^*\in[0,1]$ such that $\max_{q\in Q_k} \omega(q)\le c$. {Values for $y_3^*$ that satisfy this can again be found in Appendix \ref{sec:sh-all-params}. Finding these values was done again }
by a binary search for each value of $k$ for which $\redspace_k>1/3$, each time setting $y_4^*=c$ and using (\ref{eq:y1}) and (\ref{eq:y2}). 

\paragraph{Summary}
{
	Overall, our approach is as follows: We first fix a target competitive ratio $c$. We do the following for every value of $k\in\{1,\ldots,K\}.$ Consider the value for $y_3^*$ (for our algorithm \SonofH{}, these values are specified in Table \ref{tab:sonofharm-y3}). If $\smallestred$  is small, we check that \ref{linprog:dp2} is feasible for $y_4^*=c$ and this $y_3^*$. If $\smallestred$  is medium, we compute $w_{1k}$ and check whether $w_{1k} \le c$ or $w_{1k} > c$. In the latter case, we again check that \ref{linprog:dp2} is feasible for $y_4^*=c$ and the given value of $y_3^*$. In the former case, we check that \ref{linprog:dp1} is feasible for $y_4^*=c$ and the given value of $y_3^*$.} Finally, for $k=K+1$, it is sufficient to count blue bins, and we solve a single knapsack problem based on $w_k$ alone, {checking that the heaviest pattern is not heavier than $y_4^*=c$.
}


\subsection{Solving the knapsack problems}\label{sec:implementation-knapsack}

In order to prove our competitive ratio $c=\finalratio$, we prove feasibility of the discussed dual linear programs, which amounts to solving knapsack problems and comparing the maximum weight of a pattern to our target competitive ratio. We will now describe how our implementation of this knapsack solving works, given a set of item types as described at the beginning of Section \ref{sec:offline-solution} and a corresponding weight function $w$ (one weight per type).

We use two main heuristics to speed up the computation.
First, for each type $i$, we define the expansion $\text{exp}_i$ of type $i$ as the weight according to function $w$ divided by $t_{i+1}$. Now we sort the types in decreasing order of expansion; call this permutation of types $\pi$.
When constructing a pattern with high weight, we try to add items in the order of this permutation. Note that $\pi$ will not contain types that have expansion below that of sand: Such types will not be part of a maximum weight pattern, as the pattern with sand instead of these items has no smaller weight.

Second, we use branch and bound. We use a variable $\text{maxFound}$ that will store the maximum weight of a pattern found so far,
and give this the initial value $c-1/1000$. Whenever the current pattern cannot be extended to a pattern with weight more than
$\text{maxFound}$ (based on the expansion of the next item in the ordering $\pi$ that still fits), we stop the calculation for
this branch. Initializing $\text{maxFound}$ with a value close to $c$ immediately eliminates many patterns.

The process works as follows.
We start with type $t=\pi(1)$ (i.e., the type with the largest expansion) and an empty pattern.
For current type $t=\pi(j)$ and current pattern $q$ that contains items of total size $S$ and total weight $w(q)$ (counting only the non-sand items in the calculation of the weight), we compute an upper bound on the weight that this pattern $q$ can at most get by adding items of types $\pi(j), \pi(j+1), ... \pi(\rTypes)$, as follows. We find the first type $i$ in this order that still fits with the items of $q$ and compute $u = w(q) + (1-S)\text{exp}_i$. This is an upper bound for the weight of any bin which contains the items from $q$. If this upper bound is already smaller than $\text{maxFound}$, we immediately cancel the further exploration of this pattern $q$.

Otherwise, if we have no more types to add (i.e. we reached the end of list of types in $\pi$), set $\text{maxFound}$ to the weight of $q$ (now including the sand) and store $q$ as the heaviest pattern so far. If we still have more types to explore, find out how many items of the next type can fit maximally into $q$; call this number $m$ (if adding an item of the next type would create pattern $q^1$ or $q^2$ and we are considering the dual program $D^{k,med}_w$, we set $m=0$ as we do not need to consider these patterns). Now recursively call this procedure with type $\pi(j+1)$ and patterns $q_0, \ldots, q_m$ where $q_i$ is obtained from $q$ by adding $i$ items of type $t$.

The heuristics described in this section are still not enough to be able to examine all possible patterns in reasonable time. We explain in the next section how to reduce the set of patterns further (by reducing the number of small types) and how to ensure that larger items are more important than smaller items (by making sure the expansion of small items is monotonically nondecreasing in the size, that is, larger (but still small) items do not have smaller expansions than smaller items).

\section{The algorithm \SonofH}

For our algorithm {\SonofH} we have set initial values as follows.
The right part of Table \ref{tab:sonofharm} below contains item sizes and corresponding $\redfrac_i$
values that were set manually.
Some numbers of the form $1/i$ until the value $t_\rTypes$ are added automatically
by our program if they are not listed below (see below for details on how these are selected).

\begin{table}[h]
	\caption{Parameters and item types used for \SonofH{}.}
	\label{tab:sonofharm}
	\centering
	\subfloat[Parameters]{
		\begin{tabular}{|r|l|}\hline
			Parameter & Value\\ \hline
			$c$ & $\frac{15813}{10000}$\\
			$t_\rTypes$ & $\frac{1}{4000}$\\
			$\Gamma$ & $\frac{2}{7}$ (starting from $\frac1{14}$)\\
			$\mathcal{T}$ & $\frac{1}{50}$\\\hline
		\end{tabular}
	}
	\subfloat[Size lower bounds and values $\redfrac_i$]{
		\begin{tabular}{|r|l|}\hline
			Item size & $\redfrac_i$\\ \hline
			33345/100000 & 0\\
			33340/100000&0\\
			33336/100000&0\\
			33334/100000&0\\
			5/18&2/100\\
			7/27&105/1000\\
			1/4&1061/10000\\
			\hline
		\end{tabular}
		\begin{tabular}{|r|l|}\hline
			Item size & $\redfrac_i$\\ \hline
			8/39&8/100\\
			1/5&93/1000\\ 
			3/17&3/100\\
			1/6&8/100\\
			3/20&0\\
			29/200&0\\
			1/7&16/100\\ 
			\hline
		\end{tabular}
	}
\end{table}

The remaining values $\redfrac_i$ are set automatically  
using heuristics designed to speed up the search and minimize the
resulting upper bound. In the range $(1/3,1/2]$, we automatically
generate item sizes (with corresponding values $\redfrac_i$ and $\redspace_i$) that are less than $t_\rTypes$ apart to ensure uniqueness of $q^1$
and $q^2$: no non-sand item can be packed into any bin of pattern $q^1$ or $q^2$. 
The value $\Gamma$ specifies an upper bound on how much room is used by
red items of size at most $1/14$; larger items ($\le1/3$) use at most $1/3$ room.
Since we have this bound $\Gamma$, we also add size thresholds of the form $\Gamma/i$ for $i=1,2,3,4$, to ensure that items just below this threshold can be packed without leaving much space unused.

The last parameter is some item size $\mathcal{T}=t_j$.
Above this size, 
we generate all item sizes of the form $1/i$ for $i>3$. 
Below this size, we skip some item sizes as described below.

Our program uses an exact representation of fractions, with numerators and denominators of potentially unbounded size, in order to avoid rounding errors. 
The source code and the full list of all types and parameters as determined by the program can be found at  \url{https://sheydrich.github.io/ExtremeHarmonic/}. 
In Appendix \ref{sec:sh-all-params}, we provide an alternative set of parameters, which give a competitive ratio of 1.583 with a much smaller set of knapsack problems to check.

Additionally, in Table \ref{tab:sonofharm-y3} we provide the $y_3^*$-values that certify the competitive ratio of our algorithm. Note that only two different values for $y_3^*$ were used.

\begin{table}[h]
	\caption{$y_3^*$-values used to certify that \SonofH{} is $\finalratio$-competitive.}
	\label{tab:sonofharm-y3}
	\centering
		\begin{tabular}{|r|l||r|l|}\hline
			$y_3^*$ & range of $k$ & $y_3^*$ & range of $k$ \\\hline
			$\frac{9}{32}$ & $k\le 4,$ & $\frac{3}{16}$ & $k=5,$\\
			& $6\le k \le 7, $ && $8 \le k \le 43, $\\
			& $44\le k\le 49$ && $k > 49$\\\hline
		\end{tabular}
\end{table}

\paragraph{Automatic generation of item sizes}\label{sec:automatic-size-generation}
We start by generating all item sizes of the form $1/i$ for $i$ between 2 and $\mathcal{T}$ (if they are not already present in the parameter file). After that, we generate types above $1/3$ in steps of size $t_\rTypes$. By choosing this step size, we make sure that no non-sand items can be added to the patterns $q^1, q^2, q^3$. 
The value $\redfrac_j$ for such a type $j$ is chosen such that the pattern containing an item $\smallestred'$ of type $j$ and a large item $\mathfrak{L}$ of type $2$ (i.e., $t_{i+1}=1/2$) has as weight exactly our target competitive ratio if $k=K+1$.
That is, we consider the weighting function $w_{K+1}$. We have $w_{K+1}(\smallestred') = \frac{1-\redfrac_j}{2}$, $w_{K+1}(\mathfrak{L}) = 1$, and an upper bound for the amount of sand that fits with these items is 
$1/2-t_{j+1}$. Therefore, $\redfrac_j$ is defined as the solution of the equation
\begin{equation}
\label{eq:calcred}
1 + \frac{1-\redfrac_j}{2} +\frac{1}{1-\eps}\left(\frac12 - t_{j+1}\right) = c = \finalratio,
\end{equation}
as long as this value is positive. We stop generating types as soon as it becomes negative.
To be precise, our highest value $t_{j+1}$ is defined by taking $\redfrac_j=0$ in (\ref{eq:calcred}).

We have now generated all item sizes above $\mathcal{T}$. 
We generate large types as described in Section \ref{sec:implementation-knapsack}.
In the range $(\mathcal{T},t_\rTypes)$, we do not generate all $1/i$ types, but we skip some (to speed up the knapsack search) if this can be done without a deterioriation in the competitive ratio. We do this by considering the expansion of such items, that is, the weight divided by the infimum size. We will ensure that the expansion of smaller items is smaller than that of larger items, so that they are irrelevant (or less relevant) for the knapsack problem.

Let us consider how we test whether a certain type $(1/j,x]$ is required (where $x$ is the next larger type, i.e. either the last type generated before we started this last phase or the last type generated in this phase), and which $\redfrac_i$ we should choose. Denote by $s_i := 1/j$ the value we want to check. We compute a lower and upper bound $\underline{\redfrac}_i, \overline{\redfrac}_i$ for the $\redfrac_i$-value of this type as follows: We can compute $\bluefit_i$ and $\redfit_i$ only depending on the upper bound of the size of items of this type, i.e. depending on $x$, the lower bound of the next larger item size. First, we require $\frac{1-\redfrac_i}{\bluefit_it_{i+1}} \le 1$, which gives $\redfrac_i \ge 1-s_i\cdot \bluefit_i =: \underline{\redfrac}_i$. Second, we want to make sure that the maximum expansion of the current type is not larger than the expansion of the previous (next larger) type (since that might slow down the search), $exp_{i-1}$: $\frac{1-\redfrac_i}{\bluefit_is_i} + \frac{\redfrac_i}{\redfit_is_i} \le exp_{i-1} \Leftrightarrow \redfrac_i \le \frac{\bluefit_i\cdot\redfit_i}{\bluefit_i-\redfit_i} \left(exp_{i-1}s_i - 1/\bluefit_i\right) =: \overline{\redfrac}_i$. If $\underline{\redfrac}_i \le \overline{\redfrac}_i$, we continue to test $(1/(j+1),1/j]$; if not, we know that the \textit{previously tested} type is necessary to ensure the two constraints. Hence, we add this previous type to the list of types, together with the value $\underline{\redfrac}_{i-1}$ computed in the previous iteration.

\paragraph{Computation of $\redspace$-values}
The $\redspace$-values are completely auto-generated, in contrast to Seiden's paper, where these values are given by hand. For every type $i$ such that $t_{i+1} \in [1/6, 1/3]$, $t_{i+1}$ is added as a $\redspace$-value and for every type $i$ such that $2\cdot t_{i+1} \in [1/6, 1/3]$, $2\cdot t_{i+1}$ is added as a $\redspace$-value.
Additionally, we make sure that for each medium type we have a $\redspace$-value equal to $x$ and one equal to $1-2x$ where $x$ is the lower bound of the size of items of this type.

After computing the functions $\leaves$ and $\needs$, we then eliminate $\redspace$-values that are unused and less than $1/3$, i.e., if there is no pair of types
$i,j$ such that $\needs(i) = \leaves(j) = l, \redspace_{l}< 1/3$, then $\redspace_{l}$ is removed from the
list. This reduces the number of knapsack problems that need to be solved.

\paragraph{Computation and adjustment of values $\redfrac_i$}
\label{sec:alpha-computation}

For each item type $i$ that has size at most $1/6$ and at least $\mathcal{T}$, 
we adjust the value $\redfrac_i$ such that 
$
\frac{1-\redfrac_i}{\bluefit_i \cdot t_{i+1}} \ge f
$
where $f=\frac{95}{100}$ if $t_{i+1}\le\frac1{13}$ and $t_{i+1}>\mathcal{T}$ and $f=1$ otherwise.
To be precise, we set $\redfrac_i = 1-t_{i+1} \bluefit_i$. The reason for this is that it ensures that the
``small expansion'' of these items, where we count only the blue items of this type, is at least $f$.
This is a heuristic; it does not seem to help to make $\redfrac_i$ larger than this.

\section{Super Harmonic revisited}
\label{sec:sh}


Seiden used the following weighting functions, but presented them in a different way.
Define $k$ and $\smallestred$ as in Definition \ref{def:6}.
The two weight functions of an item of type $i$ are given by Table \ref{tab:weights_seiden}.

\begin{table}[h]
	\caption{Weighting functions used by Seiden for \SuperH{}.}
	\label{tab:weights_seiden}
	\centering
	\begin{tabular}{ll|rl}
		\multicolumn{2}{c|} {
			$w_k(i)$} & 
		\multicolumn{2}{c} {
			$v_k(i)$}\\
		\hline
		$\frac{1-\redfrac_i}{\bluefit_i} + \frac{\redfrac_i}{\redfit_i}$ & if
		$\needs(i)\ge k$ or $\needs(i)=0$
		& $\frac{1-\redfrac_i}{\bluefit_i} + \frac{\redfrac_i}{\redfit_i}$ & if
		$\leaves(i)< k$\\
		$\frac{1-\redfrac_i}{\bluefit_i}$ & if
		$\needs(i)< k$
		& $\frac{\redfrac_i}{\redfit_i}$ & if
		$\leaves(i)\ge k$\\
	\end{tabular}
\end{table}

Using these weight functions, he shows that (\ref{eq:compratio}) with $c=1.58889$ holds for
\SuperH{} algorithms. Instead of the mathematical program that Seiden considers,
we use $P^{k,\text{sml}}_w$ and its dual $D^{k,\text{sml}}_w$.
We use the method described in Section \ref{sec:dual} (a binary search for a weighted average of weights)
to check for feasibility of the dual linear programs for all values of $k$, including the cases where {$\smallestred$  is medium}.
This is a significantly easier method than the one Seiden used, since it is based on solving standard knapsack problems.

A small modification of our computer program can be used to verify Seiden's result. Surprisingly, it shows that {\Hpp} is in fact \superhratio-competitive. In contrast to Seiden's heuristic program, which took 36 hours to prove {\Hpp}'s competitive ratio, our program terminates in a few seconds. Of course, this was over fifteen years ago, but we believe the algorithmic improvement explains a significant part of the speedup.
The fast running time of our approach also allowed us to improve upon {\Hpp} within the {\SuperH} framework (at least as long as we allow multiple red items per bin): Using improved $\redfrac_i$ values, we can show a \newsuperhratio-competitive {\SuperH}-algorithm. Our values are also simpler than the ones Seiden used (which were optimized up to precision $1/2\cdot 10^{-7}$); they can be found in the appendix.

\section{Lower bound}
We prove a lower bound for any \EHarm{} algorithm.
We will consider inputs consisting of essentially four different item sizes: $1/2+\eps$, $1/3+\eps$, $1/4+\eps$, and $1/7+\eps$ (we also speak of types 1 through 4). Here $\eps$ is a very small number. However, there will be many different item sizes in the range
$(1/3,1/3+\eps]$.
The value of $\eps$ is chosen small enough that the algorithm puts all these sizes in the same type.
Note that the algorithm has not much choice about how many red items of types 2 and 3 can be packed in one bin: only one such item can be packed, else larger blue items could not be added anymore. For type 4, between 1 and 3 red items could be packed in one bin, and we will give lower bound constructions for each of these three cases.

Consider the case that the algorithm packs red type 4 items pairwise into bins. In Table \ref{tab:lb_redfit2}, we give four different inputs that together will prove a lower bound of \finallb{} for this case. A pattern $(a,b,c,d)$ denotes a set of items containing $a$ items of type 1, $b$ items of type 2 and so on. 
Note that our types defined here do not necessarily correspond to size thresholds used by the algorithm; nevertheless, each item gets a single type assigned by the algorithm, and if we use notation such as $\redfit_i$ for type $i$ as defined here, we mean the $\redfit$-value of the item type the algorithm assigns to such an item.
The other two columns of the table are explained below.

\begin{table}[t]
	\caption{Inputs for lower bound $1.5762$ in case $\redfit_4=2$.}
	\label{tab:lb_redfit2}
	\centering
	\subfloat{
		\begin{tabular}{ccc}
			Pattern & Space for $\eps \rightarrow 0$ & Distribution $\chi$\\ \hline
			$0\ 0\ 3\ 1$ & $\frac{31}{28}+2{\redfrac_3} + \frac{1-\redfrac_4}{6}
			+\frac{\redfrac_4}{2}$ & $1$\\
			\hline
			$1\ 1\ 0\ 0$ & $1 + \frac{1-\redfrac_2}2 + \frac16$ & $1$\\
			\hline
			$1\ 1\ 0\ 1$ & $1 + \frac{1-\redfrac_2}2 + \frac{1-\redfrac_4}{6} + \frac1{42}$ & $1$\\
			\hline
			$0\ 2\ 1\ 0$ & $2\cdot \frac{1+\redfrac_2}2 + \frac{1-\redfrac_3}3 + \frac1{12}$
			& $1$ (scaled) \\
			$q^1$ & $1 + \frac{1-\redfrac_2}2 + \redfrac_2$ & $\frac{2(1-\redfrac_2-\redfrac_2\cdot\redfrac_3)}{1+\redfrac_2}$\\
			$q^2$ & $1 + \frac{1-\redfrac_2}2 + \redfrac_2$ & $2\redfrac_3$\\
		\end{tabular}
	}
\end{table}

The first three lines of the table represent three different inputs to the algorithm, and the last three lines together represent the final input used in the lower bound.
We construct the first three inputs as follows. For each pattern in the table, 
items arrive in order from small to large. Each item in the pattern arrives $N$ times. In addition, we get $N$ times some amount of sand per bin, that fills up the bin completely. 
Based on each pattern and the values $\redfrac_i$ and $\redfit_i$, we can calculate exactly how much space (represented as fractions of bins) the online algorithm needs to pack each item in the pattern \emph{on average}. 
To do this, we assume that if red small items can be packed with larger 
blue ones, the algorithm will always do this (this is a worst-case assumption).
The result of this calculation is shown in the column Space.


To illustrate this approach, let us consider an input based on the pattern $(0,0,3,1)$ in the manner described above. 
As we assumed that $\redfit_4=2$, we know that items of types $3$ and $4$ will not be 
combined by the algorithm, as $3/4+2/7>1$. Thus, the algorithm will not be able to combine the red items 
of both types with any other items. 
The number of bins used for blue type $3$ items is at least $3\cdot(\frac{1-\redfrac_3}{3})N$, 
the number of bins for red type $3$ items is at least $3\cdot\redfrac_3N$. 
Analogously, we need at least $\frac{1-\redfrac_4}{6}N$ bins for blue type $4$ items and at least 
$\frac{\redfrac_4}{2}N$ bins for red type $4$ items. Finally, sand of total volume arbitrarily close to $(1-3/4-1/7)N=\frac{3}{28}N$ 
arrives, which is packed in at least as many bins by the online algorithm. 
Thus, on average the items in this pattern need $3\cdot(\frac{1-\redfrac_3}{3}) + 3\cdot\redfrac_3 + \frac{1-\redfrac_4}{6} + \frac{\redfrac_4}{2} + \frac{3}{28} 
= \frac{31}{28} + 2\redfrac_3+\frac{1-\redfrac_4}{6}+\frac{\redfrac_4}{2}$ bins to be packed.
The space needed for the second and third patterns can be calculated in the same way.

\begin{figure}[h]
	\includegraphics[width=\textwidth]{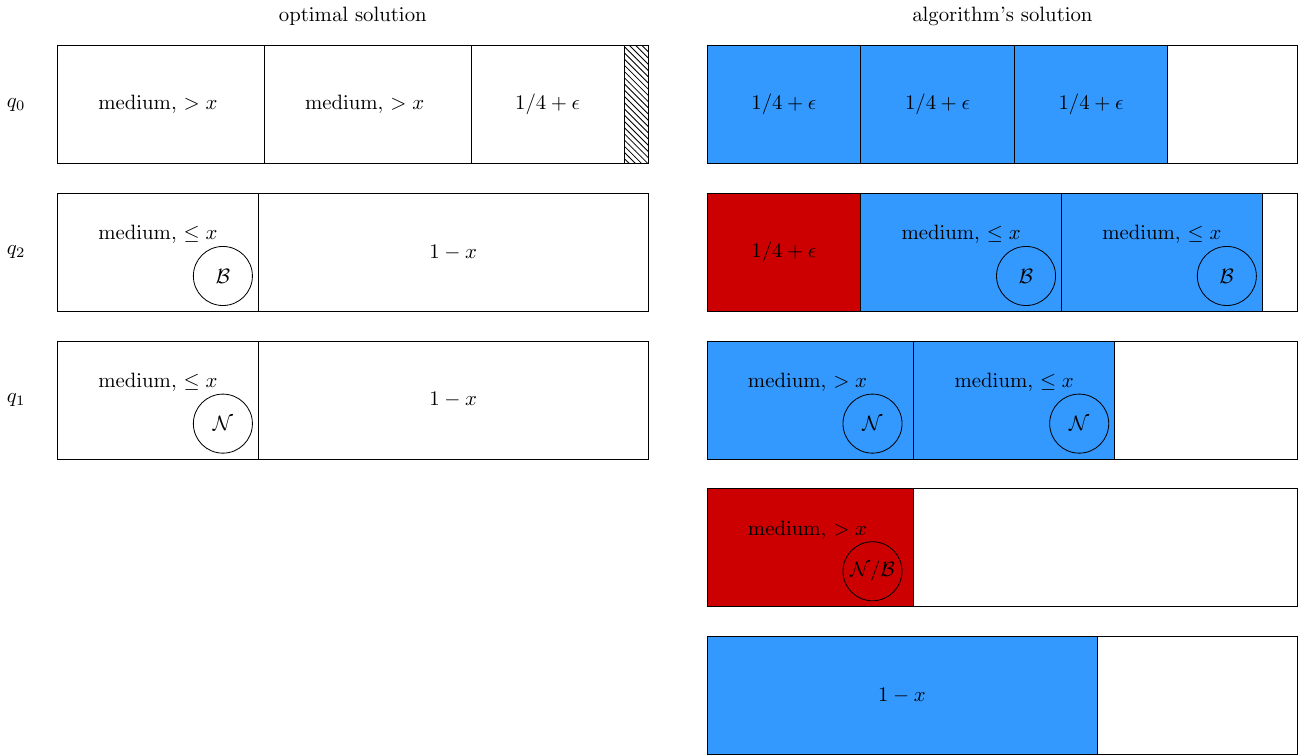}
	\caption{Fourth input for our lower bound construction. The three patterns used in the optimal solution are depicted on the left. The shaded area in the first pattern denotes sand. The algorithm produces the five types of bins depicted on the right, plus bins that only contain sand (not depicted here).}
	\label{fig:lower-bound}
\end{figure}

The fourth input (based on pattern (0,2,1,0)) requires more explanation; see also Fig. \ref{fig:lower-bound}.
For this input, we consider a combination of three patterns that arrive in the distribution given in the last column of the table.
Items of type 2 have size $1/3+\eps$ (according to the
table above) and some of them end up
alone in bins. We extend the input in this case by a number of items of size
almost $2/3$, where this number is 
calculated as explained below.
All these large items will be placed in new bins by the online algorithm.
In order for this to hold, the items of type 2 must have slightly different
sizes - not all exactly $1/3+\eps$.
We therefore pick $\eps$ small enough so that the interval
$(1/3,1/3+\eps]$ is contained in a single type according to the classification
done by the algorithm. The first item of this type
will have size $1/3+\eps/2$. The sizes of later items depend on how it is packed:
\begin{itemize}
	\item If the item is packed in a new bin, all future items
	will be smaller (in the interval $(1/3,1/3+\eps/2]$)
	\item If the item is packed into a bin with an existing item of type 2 or 3,
	all future items will be larger (in the interval $(1/3+\eps/2,1/3+\eps]$)
\end{itemize}
We use the same method for all later items of the same type, each time dividing the
remaining interval in two equal halves. 
By induction, it follows that whenever an item is placed in a new bin,
all previous items that were packed first into their bins are larger, and
all previous items that were packed into existing bins are smaller.
Therefore,
after all items of this type have arrived, let $x$ be the size of the last item
that was placed into a new bin. 
(Since the algorithm maintains a fixed fraction of red items of type 2,
there can be only constantly many items that arrived after this item; we ignore such
items.)
We have the following.
\begin{itemize}
	\item All items of size more than $x$ are packed either alone into bins or are the first item in a bin with two medium but no small red items; and 
	\item All items of size less than $x$ are in bins with items of type 3 or were
	packed as the second item of their type in an existing bin.
\end{itemize}
We now let items of size exactly $1-x$ arrive. 
For every bin with red type 3 items and blue type 2 items, two such items arrive, which will be packed in $q^2$-bins.
Assume that we have $N$ bins with pattern $q^0=(0,2,1,0)$, then we create exactly $\redfrac_3N$ such bins, i.e., we let $2\redfrac_3N$ large items arrive for these.
For every bin with a pair of blue medium items but no red items, one such $1-x$ item arrives.
The number of these bins is harder to calculate. Let $M$ be the total number of medium items in the input. 
Then the number of such bins is $\frac{1-\redfrac_2}{2}M-\redfrac_3N$. Now, we want to express $M$ in terms of $N$: Observe that $N$ is half the number of medium items larger than $x$ (as only these end up in $q^0$-bins). The number of those items is equal to the number of bins with red medium items (which is $\redfrac_2M$) plus the number of bins with two blue medium but no red items (which is $\frac{1-\redfrac_2}{2}M-\redfrac_3N$). Thus, $N$ is equal to $\frac{1}{2}\left( \redfrac_2M + \frac{1-\redfrac_2}{2}M-\redfrac_3N \right)$. This shows that $M=\frac{4+2\redfrac_3}{1+\redfrac_2}N$. Finally, we conclude that we can send  $\frac{1-\redfrac_2}{2}M-\redfrac_3N = \frac{1-\redfrac_2}{2}\cdot\frac{4+2\redfrac_3}{1+\redfrac_2}N-\redfrac_3N=\frac{2(1-\redfrac_2-\redfrac_2\cdot\redfrac_3)}{1+\redfrac_2}N$ many large items and thus get this many $q^1$-bins. 

To pack $N$ copies of a given pattern, the online algorithm needs $N$
times the space calculated in Table \ref{tab:lb_redfit2}, while the optimal solution needs exactly $N$ bins. 
In order to calculate the final lower bound, for each of the four inputs, we simply calculate the space of the pattern(s), in the last case the weighted (in proportion to the distribution) sum of the three patterns' spaces. All four cases yield a lower bound of at least {\finallb}, which is achieved if $\redfrac_1=0,\redfrac_2=0.1800,\redfrac_3=0.1276,\redfrac_4=0.1428$.
Whenever an algorithm has a smaller or larger value for some $\redfrac_i$ value, the space needed by one of the patterns (or the weighted sum of the spaces needed by the three patterns of the last case) increases and thus gives a lower bound above \finallb{}.

Constructions for the other two cases $\redfit_4=1$ and $\redfit_4=3$ can be found below in Tables \ref{tab:lb_redfit3} and \ref{tab:lb_redfit1}.
The analysis is completely analogous to the first case.
For the case $\redfit_4=1$, the best values the online algorithm can use are $\redfrac_1=0,\redfrac_1=0.19,\redfrac_2=0.0872$.
The analysis for the case $\redfit_4=3$ is particularly simple, as the given distribution requires
$100/63$ bins on average (independent of $\redfrac_2$ and $\redfrac_3$), implying a lower bound of $100/63\approx1.5873$.

\begin{table}[h]
	\caption{Inputs for lower bound $1.5788$ in case $\redfit_4=1$.}
	\label{tab:lb_redfit1}
	\centering
	\subfloat{
		\begin{tabular}{ccc}
			Pattern & Space for $\eps\to0$ & Distribution $\chi$\\
			\hline 
			$1\ 1\ 1$ & $1 + \frac{1-\redfrac_2}2 + \frac{1-\redfrac_3}{6} + \frac1{42}$ & $1$\\
			\hline
			$0\ 0\ 6$ & $6\cdot \frac{1-\redfrac_3}6 + 6\redfrac_3 + \frac1{7}$ & $1$\\
			\hline
			$0\ 2\ 2$ & $2\cdot \frac{1+\redfrac_2}2 + 2\cdot \frac{1-\redfrac_3}6 + \frac1{21}$
			& $1$ (scaled) \\
			$q^1$ & $1 + \frac{1-\redfrac_2}2 + \redfrac_2$ & $\frac{4(1-\redfrac_2-\redfrac_2\cdot\redfrac_3)}{1+\redfrac_2}$\\
			$q^2$ & $1 + \frac{1-\redfrac_2}2 + \redfrac_2$ & $4\redfrac_3$
		\end{tabular}
	}
\end{table}
\vspace{-20pt}
\begin{table}[h]
	\caption{Inputs for lower bound $1.5872$ in case $\redfit_4=3$.}
	\label{tab:lb_redfit3}
	\centering
	\subfloat{
		\begin{tabular}{ccc}
			Pattern & Space for $\eps\to0$ & Distribution $\chi$\\
			\hline 
			$1\ 1\ 1$ & $1 + \frac{1-\redfrac_2}2 + \frac{1-\redfrac_3}{6} + \frac1{42}$ & $2/3$\\
			$0\ 2\ 2$ & $2\cdot \frac{1+\redfrac_2}2 + 2\cdot \frac{1+\redfrac_3}6 + \frac1{21}$ & $1/3$
		\end{tabular}
	}
\end{table}

\vspace{-30pt}
\paragraph{Acknowledgements}
	We thank the anonymous referees for their useful comments, and Leah Epstein
	for interesting discussions.


\bibliographystyle{plain}
\bibliography{bibliography}

\appendix

\section{Alternative parameters for a competitive ratio of 1.583}\label{sec:sh-all-params}

\begin{table}
	\centering
	\subfloat[Parameters]{
		\begin{tabular}{|r|l|}\hline
			Parameter & Value\\ \hline
			$c$ & $\frac{1583}{1000}$\\
			$t_N$ & $\frac{1}{100}$\\
			$\Gamma$ & $\frac{2}{7}$ (starting from $\frac1{12}$)\\
			$\mathcal{T}$ & $\frac{1}{30}$\\\hline
		\end{tabular}
	}
	\subfloat[Size lower bounds and initial values $\redfrac_i$]{
		\begin{tabular}{|r|l|}\hline
			Item size & $\redfrac_i$\\ \hline
			335/1000 & 0\\
			334/1000&0\\
			5/18&2/100\\
			7/27&105/1000\\
			1/4&106/1000\\
			8/39&8/100\\
			1/5&93/1000\\ \hline
		\end{tabular}
		\begin{tabular}{|r|l|}\hline
			Item size & $\redfrac_i$\\ \hline
			3/17&3/100\\
			1/6&8/100\\
			3/20&0\\
			29/200&0\\
			1/7&135/1000\\
			1/13&1/10\\
			1/14&1/13 \\ \hline
		\end{tabular}
	}
	\caption{Parameters and item types.}
\label{tab:sonofharm2}
\end{table}

We give a list of item types together with their parameters in Table \ref{tab:sonofh_all_parameters}. 
Please note that type 2 is only defined for the definition of the knapsack problem in case {$\smallestred$  is medium}.
{\EHarm} algorithms, in contrast to {\SuperH} algorithms, treat all items larger than $1/2$ as a single type
(thus it sees types 1 and 2 as a single type).
Between type 6 and type 12, the values $t_i$ are $1/100$ apart. Between type 39 and type 101, the types are of the form $1/i$ for some values $i\in\{14, \ldots, 100\}$ (below $1/30$, we skip some values). The values $\redfrac_i$ for these types are computed as described in Sections \ref{sec:automatic-size-generation}. 
The paramters are auto-generated from the input in Table \ref{tab:sonofharm2}.

We give a list of all $\redspace$-values that are at most $1/3$ in Table \ref{tab:sh-deltas}. The $\redspace$-values above $1/3$ are equal to the $t_i$-values above $1/3$.
\begin{table}
	\centering
\begin{tabular}{|c|c||c|c||c|c|}\hline
	index $i$ & $\redspace_i$ & index $i$ & $\redspace_i$ & index $i$ & $\redspace_i$ \\ \hline
	$0$ & $0$ & $4$ & $11 / 50$ & $8$ & $7 / 25$ \\ \hline
	$1$ & $1 / 6$ & $5$ & $2 / 9$ & $9$ & $3 / 10$ \\ \hline
	$2$ & $3 / 17$ & $6$ & $6 / 25$ & $10$ & $8 / 25$ \\ \hline
	$3$ & $1 / 5$ & $7$ & $13 / 50$ & $11$ & $33 / 100$ \\ \hline
\end{tabular}
	\caption{$\redspace$-values below $1/3$ in the 1.583-competitive algorithm.\label{tab:sh-deltas}}
\end{table}
Finally, there were only two different $y_3^*$-values used to establish the feasibility of the dual LPs: $9/32$ for the cases $k=2,3,4,6,7,9,10,11$ and $3/16$ in all other cases.

{\small
	\newpage
	\begin{longtable}{|c|c|c|c|c|c|c|}
		\hline
		Type $i$ & $t_i$ & $\redfrac_i$ & $\bluefit_i$ & $\redfit_i$ & $\needs(i)$ & $\leaves(i)$ \\ 
		\hline \endhead 
		$1$ & $1$ & $0$ & $2$ & $0$ & $0$ & $0$ \\ 
		\hline
		$2$ & $41783 / 100000$ & $87 / 5500\approx 0.0158$ & $2$ & $1$ & $23$ & $0$ \\  \hline
		$3$ & $41 / 100$ & $1783 / 49500\approx 0.0360$ & $2$ & $1$ & $22$ & $2$ \\ \hline
		$4$ & $2 / 5$ & $253 / 4500\approx 0.0562$ & $2$ & $1$ & $21$ & $3$ \\ \hline
		$5$ & $39 / 100$ & $1261 / 16500\approx 0.0764$ & $2$ & $1$ & $20$ & $4$ \\  \hline
		$6$ & $19 / 50$ & $4783 / 49500\approx 0.0966$ & $2$ & $1$ & $19$ & $6$ \\ \hline
		$7$ & $37 / 100$ & $5783 / 49500\approx 0.1168$ & $2$ & $1$ & $18$ & $7$ \\ \hline
		$8$ & $9 / 25$ & $2261 / 16500\approx 0.1370$ & $2$ & $1$ & $17$ & $8$ \\ \hline
		$9$ & $7 / 20$ & $7783 / 49500\approx 0.1572$ & $2$ & $1$ & $16$ & $9$ \\ \hline
		$10$ & $17 / 50$ & $251 / 1500\approx 0.1673$ & $2$ & $1$ & $15$ & $10$ \\ \hline
		$11$ & $67 / 200$ & $8383 / 49500\approx 0.1694$ & $2$ & $1$ & $14$ & $11$ \\ \hline
		$12$ & $167 / 500$ & $25349 / 148500\approx 0.1707$ & $2$ & $1$ & $13$ & $11$ \\ \hline
		$13$ & $1 / 3$ & $0$ & $3$ & $0$ & $0$ & $0$ \\ \hline
		$14$ & $29 / 90$ & $0$ & $3$ & $0$ & $0$ & $0$ \\ \hline
		$15$ & $11 / 36$ & $1 / 50\approx 0.0200$ & $3$ & $1$ & $10$ & $0$ \\  \hline
		$16$ & $5 / 18$ & $21 / 200\approx 0.1050$ & $3$ & $1$ & $8$ & $1$ \\ \hline
		$17$ & $7 / 27$ & $53 / 500\approx 0.1060$ & $3$ & $1$ & $7$ & $5$ \\ \hline
		$18$ & $1 / 4$ & $2 / 25\approx 0.0800$ & $4$ & $1$ & $7$ & $0$ \\ \hline
		$19$ & $8 / 39$ & $93 / 1000\approx 0.0930$ & $4$ & $1$ & $4$ & $2$ \\ \hline
		$20$ & $1 / 5$ & $3 / 100\approx 0.0300$ & $5$ & $1$ & $3$ & $0$ \\ \hline
		$21$ & $3 / 17$ & $2 / 25\approx 0.0800$ & $5$ & $1$ & $2$ & $0$ \\ \hline
		$22$ & $1 / 6$ & $1 / 30\approx 0.0333$ & $6$ & $1$ & $1$ & $0$ \\ \hline
		$23$ & $29 / 180$ & $1 / 12\approx 0.0833$ & $6$ & $2$ & $11$ & $0$ \\ \hline
		$24$ & $11 / 72$ & $1 / 10 \approx 0.1000$ & $6$ & $2$ & $10$ & $0$\\ \hline
		$25$ & $3 / 20$ & $13 / 100\approx 0.1300$ & $6$ & $2$ & $9$ & $0$ \\ \hline
		$26$ & $29 / 200$ & $1 / 7\approx 0.1429$ & $6$ & $2$ & $9$ & $0$ \\ \hline
		$27$ & $1 / 7$ & $1 / 8\approx 0.1250$ & $7$ & $2$ & $9$ & $0$ \\ \hline
		$28$ & $1 / 8$ & $1 / 9\approx 0.1111$ & $8$ & $2$ & $7$ & $0$ \\ \hline
		$29$ & $1 / 9$ & $1 / 30\approx 0.0333$ & $9$ & $2$ & $5$ & $0$ \\ \hline
		$30$ & $29 / 270$ & $1 / 12\approx 0.0833$ & $9$ & $3$ & $11$ & $0$ \\ \hline
		$31$ & $11 / 108$ & $1 / 10\approx 0.1000$ & $9$ & $3$ & $10$ & $0$ \\ \hline
		$32$ & $1 / 10$ & $1 / 11\approx 0.0909$ & $10$ & $3$ & $9$ & $0$ \\ \hline
		$33$ & $1 / 11$ & $1 / 12\approx 0.0833$ & $11$ & $3$ & $8$ & $0$ \\ \hline
		$34$ & $1 / 12$ & $1 / 30\approx 0.0333$ & $12$ & $3$ & $7$ & $0$ \\ \hline
		$35$ & $29 / 360$ & $8 / 65\approx 0.1231$ & $12$ & $3$ & $7$ & $0$ \\ \hline
		$36$ & $1 / 13$ & $163 / 2880\approx 0.0566$ & $13$ & $3$ & $6$ & $0$ \\ \hline
		$37$ & $11 / 144$ & $33 / 280\approx 0.1179$ & $13$ & $3$ & $6$ & $0$ \\ \hline
		$38$ & $1 / 14$ & $17 / 150\approx 0.1133$ & $14$ & $4$ & $9$ & $0$ \\ \hline
		\vdots & \vdots & \vdots & \vdots & \vdots & \vdots & \vdots \\ \hline
		$122$ & $1 / 98$ & $7 / 660\approx 0.0106$ & $98$ & $28$ & $9$ & $0$ \\ \hline
		$123$ & $1 / 99$ & $21 / 2000\approx 0.0105$ & $99$ & $28$ & $9$ & $0$ \\ \hline
		\caption{Parameters used by 1.583-algorithm. The values $t_4$ to $t_{13}$ are in $\D$.}
		\label{tab:sonofh_all_parameters}
	\end{longtable}
}

\section{Parameters for an improved {\SuperH} algorithm}

With the parameters listed below in Table 
\ref{tab:improved_superh}, using the same item types used by Seiden, we are able to achieve a {\SuperH} algorithm (with more than one red item per bin) with competitive ratio \newsuperhratio.

Note that the $\redfit_i$-values are computed differently than in \Hpp{}. For types $i$ such that $\redfrac_i=0$ or $t_i > \redspace_K$, we have $\redfit_i=0$. Otherwise, we have $\redfit_i=\lfloor \frac{24/83}{t_i} \rfloor$. The value $24/83$ in this expression is related to the item threshold $12/83$. By using this bound, two items of size slightly larger than $1/7$ can be packed together in one bin.

\begin{table}[h]
	\caption{Parameters used for our improvement of \Hpp{}.}
	\label{tab:improved_superh}
	\centering
	\subfloat{
		\begin{tabular}{|c|c|c|c|}
			\hline 
			$i$ & $t_i$ & $\redfrac_i$ & $\redfit_i$ \\ 
			\hline 
			1 & $1$ & 0 & 0 \\ 
			\hline 
			2 & $341/512$ & 0 & 0 \\ 
			\hline 
			3 & $511/768$ & 0 & 0 \\ 
			\hline 
			4 & $85/128$ & 0 & 0 \\ 
			\hline 
			5 & $127/192$ & 0 & 0 \\ 
			\hline 
			6 & $21/32$ & 0 & 0 \\ 
			\hline 
			7 & $31/48$ & 0 & 0 \\ 
			\hline 
			8 & $5/8$ & 0 & 0 \\ 
			\hline 
			9 & $7/12$ & 0 & 0 \\ 
			\hline 
			10 & $1/2$ & 0 & 0 \\ 
			\hline 
			11 & $5/12 $& $9/100 =0.09 $ & 1 \\ 
			\hline 
			12 & $3/8$ & $267/2000=0.1335$ & 1 \\ 
			\hline 
			13 & $17/48$ & $311/2000=0.1555$ & 1 \\ 
			\hline 
			14 & $11/32$ & $829/5000=0.1658$ & 1 \\ 
			\hline 
			15 & $65/192$ & $107/625=0.1712$ & 1 \\ 
			\hline 
			16 & $43/128$ & $87/500=0.174$ & 1 \\ 
			\hline 
			17 & $257/768$ & $7/40=0.175$ & 1 \\ 
			\hline 
			18 & $171/512$ & $877/5000=0.1754$ & 1 \\ 
			\hline 
			19 & $1/3$ & 0 & 0\\ 
			\hline 
			20 & $13/48$ & $37/400=0.0925$ & 1 \\ 
			\hline 
			21 & $1/4$ & $46/625=0.0736$ & 1 \\ 
			\hline 
		\end{tabular}
	}
	\subfloat{
		\begin{tabular}{|c|c|c|c|}
			\hline 
			$i$ & $t_i$ & $\redfrac_i$ & $\redfit_i$ \\ 
			\hline 
			22 & $13/63$ & $1/10=0.1$ & 1 \\ 
			\hline 
			23 & $1/5$ & $7/200=0.035$ & 1 \\ 
			\hline 
			24 & $15/88$ & $83/1000=0.083$ & 1 \\ 
			\hline 
			25 & $1/6$ & $789/10000=0.0789$ & 1 \\ 
			\hline 
			26 & $12/83$ & $13/100=0.13$ & 2 \\ 
			\hline 
			27 & $1/7$ & $29/2000=0.0145$ & 2 \\ 
			\hline 
			28 & $11/83$ & $71/1000=0.071$ & 2 \\ 
			\hline 
			29 & $1/8$ & $1191/20000=0.05955$ & 2 \\ 
			\hline 
			30 & $1/9$ & $1/20=0.05$ & 2 \\ 
			\hline 
			31 & $1/10$ & $9/200=0.045$ & 2 \\ 
			\hline 
			32 & $1/11$ & $4/125=0.032$ & 3 \\ 
			\hline 
			33 & $1/12$ & $11/500=0.022$ & 3 \\ 
			\hline 
			34 & $1/13$ & $71/2000=0.0355$ & 3 \\ 
			\hline 
			35 & $1/14$ & $17/2000=0.0085$ & 4 \\ 
			\hline 
			36 & $1/15$ & $1/100=0.01$ & 4 \\ 
			\hline 
			37 & $1/16$ & $1/100=0.01$ & 4 \\ 
			\hline 
			38 & $1/17$ & $1/100=0.01$ & 4 \\ 
			\hline 
			39 & $1/18$ & $0$ & 0 \\ 
			\hline 
			\vdots & \vdots & \vdots & \vdots \\ 
			\hline
			70 & $1/49$ & 0 & 0 \\
			\hline 
		\end{tabular}
	}
\end{table} %

\end{document}